\newtheorem{theorem}{Theorem}[section]
\newcommand{\argmin}{\mathop{\rm argmin}\limits}
\newtheorem{prop}{Proposition}[section]
\title{Robust and Sparse Regression via $\gamma$-divergence}
\date{} 
\author[1]{Takayuki Kawashima} 
\author[1,2,3]{Hironori Fujisawa} 
\affil[1]{ Department of Statistical Science, The Graduate University for Advanced Studies, Tokyo  \authorcr  E-mail: \texttt{t-kawa@ism.ac.jp}}
\affil[2]{The Institute of Statistical Mathematics, Tokyo  \authorcr E-mail: \texttt{fujisawa@ism.ac.jp} }
\affil[3]{ Department of Mathematical Statistics, Nagoya University Graduate School of Medicine  }
\begin{document}
\maketitle

\begin{abstract}
In high-dimensional data, many sparse regression methods have been proposed. However, they may not be robust against outliers. 
Recently, the use of density power weight has been studied for robust parameter estimation and the corresponding divergences have been discussed. 
One of such divergences is the $\gamma$-divergence and the robust estimator using the $\gamma$-divergence is known for having a strong robustness. 
In this paper, we consider the robust and sparse regression based on $\gamma$-divergence. 
We extend the $\gamma$-divergence to the regression problem and show that it has a strong robustness under heavy contamination even when outliers are heterogeneous. 
The loss function is constructed by an empirical estimate of the $\gamma$-divergence with sparse regularization and the parameter estimate is defined as the minimizer of the loss function. 
To obtain the robust and sparse estimate, we propose an efficient update algorithm which has a monotone decreasing property of the loss function. 
Particularly, we discuss a linear regression problem with $L_1$ regularization in detail. 
In numerical experiments and real data analyses, we see that the proposed method outperforms past robust and sparse methods. 
\end{abstract}

\section{Introduction}
In high-dimensional data, sparse regression methods have been intensively studied. 
The Lasso \citep{Tibshirani94regressionshrinkage} is a typical sparse linear regression method with $L_1$ regularization, but is not robust against outliers. 
Recently, robust and sparse linear regression methods have been proposed. 
The robust least angle regression (RLARS)\citep{RePEc:bes:jnlasa:v:102:y:2007:m:december:p:1289-1299} is a robust version of LARS \citep{efron2004}, which replaces the sample correlation by a robust estimate of correlation in the update algorithm. 
The sparse least trimmed squares (sLTS)\citep{alfons2013} is a sparse version of well-known robust linear regression method LTS \citep{10.2307/2288718} based on the trimmed loss function with $L_1$ regularization. 

Recently, the robust parameter estimation using density power weight has been discussed by \citet{oro24027}, \citet{basu_book}, \citet{oro2122}, \citet{windham}, \citet{Fujisawa:2008:RPE:1434999.1435056},  \citet{RePEc:oup:biomet:v:102:y:2015:i:3:p:559-572.}, and so on. 
The density power weight gives a small weight to the terms related to outliers and then the parameter estimation becomes robust against outliers. 
Among them, the $\gamma$-divergence proposed by \citet{Fujisawa:2008:RPE:1434999.1435056} is known for having a strong robustness, which implies that the latent bias can be sufficiently small even under heavy contamination. 
In addition, to obtain the robust estimate, an efficient update algorithm was proposed with a monotone decreasing property of the loss function. 

In this paper, we propose the robust and sparse regression problem based on $\gamma$-divergence. 
First, we extend the $\gamma$-divergence to regression problem and show a strong robustness under heavy contamination even when outliers are  heterogeneous. 
Next, we consider a loss function based on the $\gamma$-divergence with sparse regularization and propose an update algorithm to obtain the robust and sparse estimate. 
\citet{Fujisawa:2008:RPE:1434999.1435056} used a Pythagorian relation on the $\gamma$-divergence, but it is not compatible with sparse regularization. 
Instead of this relation, we use the Majorization-Minimization algorithm \citep{hunter:mm}. 
%
%
%
This idea is deeply considered in a linear regression problem with $L_1$ regularization. 
Finally, in numerical experiments and real data analyses, we show that our method outperforms other robust and sparse methods.

\section{Regression based on $\gamma$-divergence}
The $\gamma$-divergence was defined for two probability density functions and its properties were investigated by \citet{Fujisawa:2008:RPE:1434999.1435056}.
In this section, the $\gamma$-divergence is extended to the regression problem, in other words, defined for two conditional probability density functions  and the corresponding robust properties are presented. 

\subsection{$\gamma$-divergence for regression}\label{def-gamma}
We suppose $g(x,y)$, $g(y|x)$ and $g(x)$ are the underlying probability density functions of $(x,y)$, $y$ given $x$ and $x$, respectively. 
Let $f(y|x)$ be another parametric conditional probability density function of $y$ given $x$. 
Let us define the $\gamma$-cross entropy for regression by
\begin{align}\label{gamma-div emp}
& d_\gamma (g(y|x),f(y|x);g(x))  \nonumber \\
 &   = -\frac{1}{\gamma} \log \int \left( \int g(y|x) f(y|x)^{\gamma} dy  \right) g(x) dx  + \frac{1}{1+\gamma} \log \int \left( \int f(y|x)^{1+\gamma} dy  \right) g(x) dx \nonumber \\
 &   =  -\frac{1}{\gamma} \log \int  \int  f(y|x)^{\gamma} g(x,y) dxdy   + \frac{1}{1+\gamma} \log \int \left( \int f(y|x)^{1+\gamma} dy  \right) g(x) dx. 
\end{align}
The $\gamma$-divergence for regression is defined by
\begin{gather}\label{gamma-div def}
D_{\gamma}(g(y|x),f(y|x);g(x))= -d_{\gamma}(g(y|x),g(y|x);g(x))+d_{\gamma}(g(y|x),f(y|x);g(x)).
\end{gather}
The $\gamma$-divergence for regression was first proposed by \citet{Fujisawa:2008:RPE:1434999.1435056} and many properties were already shown. 
However, we adopt the definition (\ref{gamma-div def}), which is slightly different from the past one, because (\ref{gamma-div def}) satisfies the Pythagorian relation approximately (see Sec \ref{property}).
\begin{theorem}\label{thm:gam-div}
We can show that
\begin{align*}
\mathrm{(i)}& \quad  D_{\gamma}(g(y|x),f(y|x);g(x)) \geq 0  ,\\
\mathrm{(ii)}& \quad D_{\gamma}(g(y|x),f(y|x);g(x))=0 \quad \Leftrightarrow \quad  g(y|x)=f(y|x)  \quad \rm{(a.e.)}, \\
\mathrm{(iii)}& \quad \lim_{\gamma \to 0 } D_{\gamma}(g(y|x),f(y|x);g(x)) = \int D_{KL}(g(y|x),f(y|x)) g(x) dx, 
\end{align*}
where $D_{KL}(g(y|x),f(y|x))=\int g(y|x) \log g(y|x) dy - \int g(y|x) \log f(y|x)dy$.
\end{theorem}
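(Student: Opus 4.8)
The plan is to begin by reducing the divergence to a purely algebraic inequality among three scalars. Using $g(x,y)=g(y|x)g(x)$, the self-term simplifies because its two integrands coincide: $\int\int g(y|x)^{\gamma}g(x,y)\,dxdy = \int\big(\int g(y|x)^{1+\gamma}dy\big)g(x)dx$. Writing $I_1=\int\int f(y|x)^{\gamma}g(x,y)\,dxdy$, $I_2=\int\big(\int f(y|x)^{1+\gamma}dy\big)g(x)dx$, and $I_3=\int\big(\int g(y|x)^{1+\gamma}dy\big)g(x)dx$, I would show that for $\gamma>0$ the divergence collapses to $D_\gamma=\tfrac{1}{\gamma(1+\gamma)}\log I_3-\tfrac{1}{\gamma}\log I_1+\tfrac{1}{1+\gamma}\log I_2$. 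Everything then rests on these three quantities.

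For (i) and (ii), since $\gamma(1+\gamma)>0$, the claim $D_\gamma\ge 0$ is equivalent to $I_1^{1+\gamma}\le I_3\,I_2^{\gamma}$. I would establish this by two nested applications of H\"older's inequality with conjugate exponents $\tfrac{1+\gamma}{\gamma}$ and $1+\gamma$. First, for each fixed $x$, applying H\"older to $\int f(y|x)^{\gamma}g(y|x)\,dy$ yields the bound $\big(\int f^{1+\gamma}dy\big)^{\gamma/(1+\gamma)}\big(\int g^{1+\gamma}dy\big)^{1/(1+\gamma)}$. Integrating this against $g(x)\,dx$ and splitting the weight as $g(x)=g(x)^{\gamma/(1+\gamma)}g(x)^{1/(1+\gamma)}$, a second H\"older over $x$ with the same exponents gives $I_1\le I_2^{\gamma/(1+\gamma)}I_3^{1/(1+\gamma)}$, which is the desired bound after raising both sides to the power $1+\gamma$. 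For the equality characterization (ii), I would track the equality case of the inner H\"older step: for $g(x)$-almost every $x$ it forces $f(y|x)^{1+\gamma}\propto g(y|x)^{1+\gamma}$ in $y$, and since both are densities integrating to one, $f(y|x)=g(y|x)$ almost everywhere. The converse is immediate, since $f=g$ makes $d_\gamma(g,f;g)=d_\gamma(g,g;g)$ and hence $D_\gamma=0$ directly from the definition.

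For (iii), I would perform a first-order expansion in $\gamma$. Writing $f^{\gamma}=1+\gamma\log f+O(\gamma^2)$ and likewise for the $g$ terms, I expect $I_1=1+\gamma L_f+O(\gamma^2)$, $I_2=1+\gamma M_f+O(\gamma^2)$, and $I_3=1+\gamma M_g+O(\gamma^2)$, where $L_f=\int\big(\int g(y|x)\log f(y|x)\,dy\big)g(x)dx$ and $M_g$ is the analogous quantity with $f$ replaced by $g$ inside the logarithm. Using $\log(1+\gamma a+O(\gamma^2))=\gamma a+O(\gamma^2)$ and substituting, the singular factors $1/\gamma$ and $1/\gamma(1+\gamma)$ cancel against the leading $\gamma$ terms, and the $M_f$ contribution vanishes in the limit because it carries an extra factor $\gamma/(1+\gamma)$. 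This leaves $\lim_{\gamma\to 0^+}D_\gamma=M_g-L_f$, which is precisely $\int D_{KL}(g(y|x),f(y|x))\,g(x)\,dx$.

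The main obstacle will be the bookkeeping in the double H\"older step, namely distributing the weight $g(x)$ correctly between the two factors and matching the exponents so that the final bound closes to exactly $I_3\,I_2^{\gamma}$ rather than some related but unusable product. A secondary subtlety, for (ii), is arguing that the inner equality condition already pins down $f=g$ almost everywhere in both $x$ and $y$, so that no separate analysis of the outer H\"older equality is needed; I would also need to note integrability assumptions ensuring the $O(\gamma^2)$ remainders in (iii) are uniform enough to justify interchanging the limit with the integrals.
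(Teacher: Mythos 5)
Your proposal is correct and follows essentially the same route as the paper: (i)--(ii) rest on H\"older's inequality with conjugate exponents $1+\gamma$ and $(1+\gamma)/\gamma$ --- the paper applies it once on the joint measure $g(x)\,dx\,dy$, obtaining your bound $I_1 \le I_3^{1/(1+\gamma)}\, I_2^{\gamma/(1+\gamma)}$ and the equality condition $g(y|x)^{1+\gamma}=\tau f(y|x)^{1+\gamma}$ in a single step rather than via your two nested applications, with the same normalization argument pinning down $g=f$ --- and (iii) uses the identical expansion $f^{\gamma}=1+\gamma\log f+O(\gamma^{2})$ with the same cancellation of the singular $1/\gamma$ factors. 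The difference is purely organizational (iterated versus joint H\"older), not mathematical.
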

The proof is in the Appendix. 
In what follows, we refer to the regression based on $\gamma$-divergence as the $\gamma$-regression.
%
%
\subsection{Estimation for $\gamma$-regression}
Let $f(y|x;\theta)$ be the conditional probability density function of $y$  given $x$ with parameter $\theta$. 
The target parameter can be considered by 
\begin{align}
\theta^*_{\gamma} & = \argmin_{\theta} D_{\gamma}(g(y|x),f(y|x;\theta);g(x)) \nonumber \\
&= \argmin_{\theta} d_{\gamma}(g(y|x),f(y|x;\theta);g(x)).\label{def est}
\end{align}
When $g(y|x)=f(y|x;\theta^*)$, we have $\theta^*_{\gamma}=\theta^*$.

Let $(x_1,y_1) ,\ldots, (x_n,y_n)$ be the observations randomly drawn from the underlying distribution $g(x,y)$. 
Using the formula (\ref{gamma-div emp}), the $\gamma$-cross entropy for regression, $ d_{\gamma}(g(y|x),f(y|x;\theta);g(x))$, can be empirically estimated by 
\begin{align*}
&\bar{d}_\gamma(f(y|x;\theta)) \\
&= - \frac{1}{\gamma} \log \left\{ \frac{1}{n} \sum_{i=1}^n  f(y_i | x_i ;\theta)^{\gamma} \right\}  + \frac{1}{1+\gamma} \log \left\{ \frac{1}{n} \sum_{i=1}^n \int f(y | x_i ;\theta)^{1+\gamma} dy \right\}. 
\end{align*}
By virtue of (\ref{def est}), we define the $\gamma$-estimator by
\begin{align*}
\hat{ \theta}_{\gamma}=\argmin_{\theta} \bar{d}_\gamma(f(y|x;\theta)).
\end{align*}
In a similar way to in \citet{Fujisawa:2008:RPE:1434999.1435056}, we can show the consistency of $\hat{\theta}_{\gamma}$ to $\theta^*_{\gamma}$ under some conditions.
\subsection{Robust properties for $\gamma$-regression}\label{property}
In this subsection, the robust properties are presented from two viewpoints of latent bias and Pythagorian relation. 
The latent bias was discussed in \citet{Fujisawa:2008:RPE:1434999.1435056} and \citet{RePEc:oup:biomet:v:102:y:2015:i:3:p:559-572.}, which is described later. 
Using the results obtained there, the Pythagorian relation is shown in Theorems \ref{pytha_homogene} and \ref{hetero_pytha}.

Let $f^*(y|x)=f_{\theta^*}(y|x)=f(y|x;\theta^*)$ and $\delta(y|x)$ be the target conditional probability density function and the contamination conditional probability density function related to outliers, respectively. 
Let $\epsilon$ and $\epsilon(x)$ denote the outlier ratios which are  independent and dependent of $x$, respectively. 
Under homogeneous and heterogeneous contaminations, we suppose that the underlying conditional probability density function can be expressed as 
\begin{align*}
g(y|x) &= (1-\epsilon)f(y|x;\theta^*) + \epsilon \delta(y|x), \\
g(y|x) &= (1-\epsilon(x))f(y|x;\theta^*) + \epsilon(x) \delta(y|x).
\end{align*}
Let 
\begin{align*}
 {\nu}_{f,\gamma }(x)  = \left\{ \int \delta (y|x) f(y|x)^{ \gamma } dy  \right\}^{ \frac{1}{ \gamma} } \qquad (   \gamma > 0).
\end{align*}
and let 
\begin{align*}
{\nu}_{f, \gamma} = \left\{ \int \nu_{f,\gamma}(x)^{\gamma} g(x) dx  \right\}^{ \frac{1}{ \gamma }} .
\end{align*} 
Here we assume that
\begin{align*}
\nu_{f_{\theta^*},\gamma} \approx 0,
\end{align*}
which implies that $\nu_{f_{\theta^*},\gamma}(x) \approx 0$ for any $x$ (a.e.) and  illustrates that the contamination conditional probability density  function $\delta(y|x)$ lies on the tail of the target conditional probability density function $f(y|x;\theta^*)$. 
For example, if $\delta(y|x)$ is the dirac function at the outlier $y_{\dag}(x)$ given $x$, then we have $\nu_{f_{\theta^*},\gamma}(x) = f(y_{\dag}(x)|x;\theta^*)$, which should be sufficiently small because $y_{\dag}(x)$ is an outlier. 
In this subsection, we show that $\theta^*_{\gamma}-\theta^*$ is expected to be small even if $\epsilon$ or $\epsilon(x)$ is not small. 
To make the discussion easier, we prepare the monotone transformation of the $\gamma$-cross entropy for regression by
\begin{align*}
\tilde{d}_{\gamma}(g(y|x),f(y|x;\theta);g(x))& = - \exp  \left\{ -\gamma d_{\gamma}(g(y|x),f(y|x;\theta);g(x)) \right\} \\
&= - \frac{  \int \left(  \int g(y|x) f(y|x;\theta)^{\gamma} dy \right) g(x) dx    }{ \left\{ \int \left( \int f(y|x;\theta)^{1+\gamma} dy  \right) g(x) dx \right\}^{\frac{\gamma}{1+\gamma} } }.
\end{align*}
\subsubsection{Homogeneous contamination}\label{homogeneous}
%
%
Here we provide the following proposition, which was given in  \citet{RePEc:oup:biomet:v:102:y:2015:i:3:p:559-572.}.
\begin{prop}\label{homo propos}
%
%
\begin{align*}
&\tilde{d}_{\gamma}(g(y|x),f(y|x;\theta);g(x)) \\
& = (1-\epsilon) \tilde{d}_\gamma (f(y|x;\theta^*),f(y|x;\theta);g(x)) -  \frac{\epsilon \nu_{f_{\theta},\gamma}^{\gamma}}{\left\{ \int \left( \int f(y|x;\theta)^{1+\gamma} dy  \right) g(x) dx \right\}^{\frac{\gamma}{1+\gamma} }} .
\end{align*}
\if0
\begin{proof}
We see that
\begin{align*}
& \int \left( \int g(y|x)f(y|x;\theta)^{\gamma} dy \right) g(x) dx \\
&= \int \left( \int \left\{  (1-\epsilon)f(y|x;\theta^*) + \epsilon \delta(y|x) \right\} f(y|x;\theta)^{\gamma} dy \right) g(x) dx \\
&=(1-\epsilon) \left\{ \int \left( \int  f(y|x;\theta^*)   f(y|x;\theta)^{\gamma} dy \right) g(x) dx \right\}  \\[3pt]
& \qquad + \epsilon \left\{\int \left( \int    \delta(y|x)  f(y|x;\theta)^{\gamma} dy \right) g(x) dx \right\} \\[3pt]
&=(1-\epsilon) \left\{ \int \left( \int  f(y|x;\theta^*)   f(y|x;\theta)^{\gamma} dy \right) g(x) dx \right\} + \epsilon \nu_{f_{\theta},\gamma}^{\gamma} .
\end{align*}
Therefore, it holds that
\begin{align*}
& \tilde{d}_{\gamma}(g(y|x),f(y|x;\theta);g(x)) \\
&= - \frac{  \int \left(  \int g(y|x) f(y|x;\theta)^{\gamma} dy \right) g(x) dx    }{ \left\{ \int \left( \int f(y|x;\theta)^{1+\gamma} dy  \right) g(x) dx \right\}^{\frac{\gamma}{1+\gamma} } } \\[3pt]
&= -(1-\epsilon) \frac{  \int \left( \int  f(y|x;\theta^*)   f(y|x;\theta)^{\gamma} dy \right) g(x) dx }{ \left\{ \int \left( \int f(y|x;\theta)^{1+\gamma} dy  \right) g(x) dx \right\}^{\frac{\gamma}{1+\gamma} } } -  \frac{\epsilon \nu_{f_{\theta},\gamma}^{\gamma}}{\left\{ \int \left( \int f(y|x;\theta)^{1+\gamma} dy  \right) g(x) dx \right\}^{\frac{\gamma}{1+\gamma} }} \\[3pt]
&=  (1-\epsilon) \tilde{d}_\gamma (f(y|x;\theta^*),f(y|x;\theta);g(x))   -  \frac{\epsilon \nu_{f_{\theta},\gamma}^{\gamma}}{\left\{ \int \left( \int f(y|x;\theta)^{1+\gamma} dy  \right) g(x) dx \right\}^{\frac{\gamma}{1+\gamma} }}  .
\end{align*}
%
%

\end{proof}
\fi
\end{prop}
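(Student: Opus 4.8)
The plan is to compute directly from the definition of $\tilde d_\gamma$, exploiting the fact that the homogeneous contamination model makes the numerator linear in $g(y|x)$ while the denominator depends only on $f(y|x;\theta)$ and therefore acts as a fixed normalizing constant throughout. No clever inequality or Pythagorian-type manipulation is needed; the identity is a bookkeeping consequence of linearity.

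First I would write out $\tilde d_\gamma(g(y|x),f(y|x;\theta);g(x))$ in its explicit fractional form and focus on the numerator $\int\!\bigl(\int g(y|x) f(y|x;\theta)^\gamma\,dy\bigr)g(x)\,dx$. Substituting the decomposition $g(y|x)=(1-\epsilon)f(y|x;\theta^*)+\epsilon\,\delta(y|x)$ and using linearity of the inner $dy$-integral, the numerator splits into a $(1-\epsilon)$-weighted term built from $f(y|x;\theta^*)$ and an $\epsilon$-weighted term built from $\delta(y|x)$.

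Next I would identify each piece with the quantities in the statement. The $\epsilon$-contribution to the numerator is $\epsilon\int\!\bigl(\int \delta(y|x)f(y|x;\theta)^\gamma\,dy\bigr)g(x)\,dx$, which equals $\epsilon\,\nu_{f_\theta,\gamma}^\gamma$ by the definitions of $\nu_{f,\gamma}(x)$ and $\nu_{f,\gamma}$ (recall $\nu_{f_\theta,\gamma}^\gamma=\int\nu_{f_\theta,\gamma}(x)^\gamma g(x)\,dx$). Attaching the leading minus sign and the common denominator $D=\{\int(\int f(y|x;\theta)^{1+\gamma}dy)g(x)\,dx\}^{\gamma/(1+\gamma)}$, this piece becomes $-\epsilon\,\nu_{f_\theta,\gamma}^\gamma/D$, which is the second term of the claim. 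For the $(1-\epsilon)$-contribution I would simply observe that its numerator $\int\!\bigl(\int f(y|x;\theta^*)f(y|x;\theta)^\gamma\,dy\bigr)g(x)\,dx$, over the same $D$ and with the leading minus, is by definition $\tilde d_\gamma(f(y|x;\theta^*),f(y|x;\theta);g(x))$, since the denominator of that cross entropy involves only $f(\cdot;\theta)$ and hence coincides with $D$. Multiplying by $(1-\epsilon)$ recovers the first term and completes the identity.

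There is essentially no analytic obstacle: the result follows from linearity of the numerator in $g(y|x)$ together with the observation that the normalizing denominator $D$ is free of $g$ and therefore identical in all three occurrences of $\tilde d_\gamma$. The only point requiring care is to keep $D$ attached consistently so that the $(1-\epsilon)$-term collapses exactly into the target-versus-model cross entropy $\tilde d_\gamma(f(y|x;\theta^*),f(y|x;\theta);g(x))$ rather than a rescaled version of it, and to track the overall minus sign from the definition of $\tilde d_\gamma$ correctly across both terms.
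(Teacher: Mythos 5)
Your proposal is correct and follows essentially the same route as the paper's own proof: substitute the contamination decomposition into the numerator, split by linearity, identify the $\epsilon$-part with $\epsilon\,\nu_{f_{\theta},\gamma}^{\gamma}$ via the definitions of $\nu_{f,\gamma}(x)$ and $\nu_{f,\gamma}$, and divide by the common denominator, which depends only on $f(y|x;\theta)$ and $g(x)$ and therefore coincides across all occurrences of $\tilde{d}_{\gamma}$. No gaps; the bookkeeping of the minus sign and the shared denominator is exactly the content of the paper's argument.
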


Recall that $\theta_{\gamma}^*$ and $\theta^*$ are also the minimizers of $\tilde{d}_\gamma (g(y|x),f(y|x;\theta);g(x))$ and $\tilde{d}_\gamma (f(y|x;\theta^*),f(y|x;\theta);g(x))$, respectively. 
We can expect $\nu_{f_{\theta},\gamma} \approx 0$ from the assumption $\nu_{f_{\theta^*},\gamma} \approx 0$ if the tail behavior of $f(y|x;\theta)$ is close to that of $f(y|x;\theta^*)$. 
We see from Proposition \ref{homo propos} and the condition $\nu_{f_{\theta},\gamma} \approx 0$ that
\begin{align*}
\theta^*_{\gamma} & =  \argmin_{\theta} \tilde{d}_\gamma (g(y|x),f(y|x;\theta);g(x)) \\
&=  \argmin_{\theta} \left[ (1-\epsilon) \tilde{d}_\gamma (f(y|x;\theta^*),f(y|x;\theta);g(x)) - \frac{  \epsilon  \nu_{f_{\theta},\gamma}^{\gamma} }{\left\{ \int \left( \int f(y|x;\theta)^{1+\gamma} dy  \right) g(x) dx \right\}^{\frac{\gamma}{1+\gamma} }} \right] \\
& \approx \argmin_{\theta}  (1-\epsilon) \tilde{d}_\gamma (f(y|x;\theta^*),f(y|x;\theta);g(x)) \\
& = \theta^* .
\end{align*} 
Therefore, under homogeneous contamination, it can be expected that the latent bias $\theta^*_{\gamma} - \theta^*$ is small even if $\epsilon$ is not small. 
%
Moreover, we can show the following theorem, using Proposition \ref{homo propos}.

\begin{theorem}\label{pytha_homogene}
Let $\nu = \max\left\{\nu_{f_{\theta},\gamma}  , \nu_{f_{\theta^*},\gamma} \right\}$. Then, the Pythagorean relation among $g(y|x)$, $f(y|x;\theta^*)$, $f(y|x;\theta)$ approximately holds:
\begin{align*}
&D_{\gamma}(g(y|x),f(y|x;\theta);g(x)) - D_{\gamma}(g(y|x),f(y|x;\theta^*);g(x)) -D_{\gamma}(f(y|x;\theta^*),f(y|x;\theta);g(x)) \\
&= O( \nu^{\gamma}).
\end{align*}
\end{theorem}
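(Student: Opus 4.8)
The plan is to reduce the statement to the $\gamma$-cross entropy $d_{\gamma}$ and then exploit the linear decomposition supplied by Proposition~\ref{homo propos}. Throughout I abbreviate $f_{\theta}=f(y|x;\theta)$ and $f^{*}=f(y|x;\theta^{*})$ and suppress the common weight $g(x)$.

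\textbf{Step 1 (collapse to cross entropies).} Expanding each divergence by $D_{\gamma}(g,f)=d_{\gamma}(g,f)-d_{\gamma}(g,g)$, the self-entropy $d_{\gamma}(g,g)$ cancels between $D_{\gamma}(g,f_{\theta})$ and $D_{\gamma}(g,f^{*})$, while $D_{\gamma}(f^{*},f_{\theta})=d_{\gamma}(f^{*},f_{\theta})-d_{\gamma}(f^{*},f^{*})$. Hence the left-hand side of the theorem equals
\[
\bigl[d_{\gamma}(g,f_{\theta})-d_{\gamma}(f^{*},f_{\theta})\bigr]-\bigl[d_{\gamma}(g,f^{*})-d_{\gamma}(f^{*},f^{*})\bigr].
\]
It therefore suffices to show that each bracket equals $-\tfrac{1}{\gamma}\log(1-\epsilon)+O(\nu^{\gamma})$, so that the two leading terms cancel and only $O(\nu^{\gamma})$ survives.

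\textbf{Step 2 (pass to $\tilde d_{\gamma}$ and apply Proposition~\ref{homo propos}).} Since $d_{\gamma}=-\tfrac{1}{\gamma}\log(-\tilde d_{\gamma})$, each bracket is the logarithm of a ratio of $\tilde d_{\gamma}$ values. Dividing the identity of Proposition~\ref{homo propos} by $\tilde d_{\gamma}(f^{*},f_{\theta})$ gives
\[
\frac{-\tilde d_{\gamma}(g,f_{\theta})}{-\tilde d_{\gamma}(f^{*},f_{\theta})}=(1-\epsilon)+\frac{\epsilon\,\nu_{f_{\theta},\gamma}^{\gamma}}{c_{\gamma}(\theta)\,\bigl(-\tilde d_{\gamma}(f^{*},f_{\theta})\bigr)},
\]
where $c_{\gamma}(\theta)=\{\int(\int f_{\theta}^{1+\gamma}dy)g(x)dx\}^{\gamma/(1+\gamma)}$ is the normalizing denominator appearing in Proposition~\ref{homo propos}. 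Because $\nu_{f_{\theta},\gamma}\le\nu$, the second term is $O(\nu^{\gamma})$. An identical computation with $\theta$ replaced by $\theta^{*}$ handles the second bracket, its correction being $O(\nu_{f_{\theta^{*}},\gamma}^{\gamma})=O(\nu^{\gamma})$. Applying the expansion $\log(1-\epsilon+c\,\nu^{\gamma})=\log(1-\epsilon)+O(\nu^{\gamma})$ to each ratio then shows that both brackets equal $-\tfrac{1}{\gamma}\log(1-\epsilon)+O(\nu^{\gamma})$, and subtracting them yields the claimed $O(\nu^{\gamma})$.

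The main obstacle I anticipate lies in the final conversion: passing from the additive decomposition of $\tilde d_{\gamma}$ to the logarithmic scale of $d_{\gamma}$ requires that $-\tilde d_{\gamma}(f^{*},f_{\theta})$ be bounded away from zero (equivalently, that $d_{\gamma}(f^{*},f_{\theta})$ stay finite), so that the coefficient multiplying $\nu^{\gamma}$ does not blow up and the first-order Taylor expansion of the logarithm remains legitimate and uniform. This is precisely where the regularity on the tails of $f(y|x;\theta)$—already invoked to pass from $\nu_{f_{\theta^{*}},\gamma}\approx0$ to $\nu_{f_{\theta},\gamma}\approx0$—is needed, and I would make that boundedness an explicit hypothesis of the argument.
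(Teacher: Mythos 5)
Your proof is correct and takes essentially the same route as the paper's: both reduce the left-hand side to $\bigl[d_{\gamma}(g,f_{\theta})-d_{\gamma}(f^{*},f_{\theta})\bigr]-\bigl[d_{\gamma}(g,f^{*})-d_{\gamma}(f^{*},f^{*})\bigr]$, use the mixture decomposition underlying Proposition \ref{homo propos} (which the paper re-derives inline rather than cites) to show each bracket equals $-\tfrac{1}{\gamma}\log(1-\epsilon)+O(\nu^{\gamma})$, and subtract. Your explicit flagging of the requirement that $-\tilde{d}_{\gamma}(f^{*},f_{\theta})$ be bounded away from zero for the logarithmic expansion is a condition the paper's proof also relies on but leaves implicit.
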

%
The proof is in the Appendix. 
The Pythagorean relation implies that the minimization of the divergence from $f(y|x;\theta)$ to the underlying conditional probability density function $g(y|x)$ is approximately the same as that to the target conditional probability density function $f(y|x;\theta^*)$.
Therefore, under homogeneous contamination, we can see why our proposed method works well in terms of minimization of $\gamma$-divergence.

\subsubsection{Heterogeneous contamination}

Under heterogeneous contamination, we assume that the parametric conditional probability density function $f(y|x;\theta)$ is a location-scale family given by 
\begin{align*}
f(y|x;\theta) = \frac{1}{\sigma} s \left(\frac{y- q(x;\xi)}{\sigma} \right) ,
\end{align*}
where $s(y)$ is a probability density function, $\sigma$ is a scale parameter and $q(x;\xi)$ is a location function with a regression parameter $\xi$, e.g., $q(x;\xi)=\xi^Tx$. 
Then, we can obtain
\begin{align*}
\int f(y|x;\theta)^{1+\gamma} dy &= \int \frac{1}{\sigma^{1+\gamma}} s\left(\frac{y- q(x;\xi)}{\sigma} \right)^{1+\gamma} dy \\
& = \sigma^{-\gamma} \int s(z)^{1+\gamma} dz,
\end{align*}
%
%
That does not depend on the explanatory variable $x$. 
%
%
Here, we provide the following proposition, which was given in \citet{RePEc:oup:biomet:v:102:y:2015:i:3:p:559-572.}. 
\begin{prop}\label{heter propos}
%
\begin{align*}
&\tilde{d}_{\gamma}(g(y|x),f(y|x;\theta);g(x)) \\
& = \left(1-\int \epsilon(x) g(x) dx \right)^{\frac{\gamma}{1+\gamma}} \tilde{d}_{\gamma} (f(y|x;\theta^*),f(y|x;\theta); (1-\epsilon(x)) g(x))  \\
& \qquad \qquad \quad - \frac{\int \nu_{f_{\theta},\gamma}(x)^{\gamma} \epsilon(x)g(x) dx  }{\left\{  \sigma^{-\gamma} \int s(z)^{1+\gamma} dz \right\}^{\frac{\gamma}{1+\gamma} }  }    .
\end{align*}
\if0
\begin{proof}
We see that
\begin{align*}
& \int \left( \int g(y|x)f(y|x;\theta)^{\gamma} dy \right) g(x) dx \\
&= \int \left( \int \left\{  (1-\epsilon(x))f(y|x;\theta^*) + \epsilon(x) \delta(y|x) \right\} f(y|x;\theta)^{\gamma} dy \right) g(x) dx \\
&= \left\{ \int \left( \int  f(y|x;\theta^*)   f(y|x;\theta)^{\gamma} dy \right)(1-\epsilon(x)) g(x) dx \right. \\
&  \left. \qquad + \int \left( \int    \delta(y|x)  f(y|x;\theta)^{\gamma} dy \right) \epsilon(x)g(x) dx \right\} \\
&=\left\{ \int \left( \int  f(y|x;\theta^*)   f(y|x;\theta)^{\gamma} dy \right)(1-\epsilon(x)) g(x) dx \right\} +  \int \nu_{f_{\theta},\gamma}(x)^{\gamma} \epsilon(x)g(x) dx  .
\end{align*}
Therefore, it holds that
\begin{align*}
& \tilde{d}_{\gamma}(g(y|x),f(y|x;\theta);g(x)) \\[3pt]
&=-\frac{  \int \left(  \int g(y|x) f(y|x;\theta)^{\gamma} dy \right) g(x) dx    }{ \left\{ \int \left( \int f(y|x;\theta)^{1+\gamma} dy  \right) g(x) dx \right\}^{\frac{\gamma}{1+\gamma} } } \\[3pt]
&= -\frac{\left\{ \int \left( \int  f(y|x;\theta^*)   f(y|x;\theta)^{\gamma} dy \right)(1-\epsilon(x)) g(x) dx \right\}}{\left\{ \int \left( \int f(y|x;\theta)^{1+\gamma} dy  \right) g(x) dx \right\}^{\frac{\gamma}{1+\gamma} } } \\
& \qquad \qquad -  \frac{\int \nu_{f_{\theta},\gamma}(x)^{\gamma} \epsilon(x)g(x) dx  }{\left\{ \int \left( \int f(y|x;\theta)^{1+\gamma} dy  \right) g(x) dx \right\}^{\frac{\gamma}{1+\gamma} }  } \\[3pt]
&=  \tilde{d}_{\gamma} (f(y|x;\theta^*),f(y|x;\theta); (1-\epsilon(x)) g(x)) \frac{ \left\{ \int \left( \int f(y|x;\theta)^{1+\gamma} dy  \right)(1-\epsilon(x)) g(x) dx \right\}^{\frac{\gamma}{1+\gamma} } }{ \left\{ \int \left( \int f(y|x;\theta)^{1+\gamma} dy  \right) g(x) dx \right\}^{\frac{\gamma}{1+\gamma} } }   \\
& \qquad \qquad - \frac{\int \nu_{f_{\theta},\gamma}(x)^{\gamma} \epsilon(x)g(x) dx  }{\left\{ \int \left( \int f(y|x;\theta)^{1+\gamma} dy  \right) g(x) dx \right\}^{\frac{\gamma}{1+\gamma} }  } \\[3pt] 
&= \left(1-\int \epsilon(x) g(x) dx \right)^{\frac{\gamma}{1+\gamma}} \tilde{d}_{\gamma} (f(y|x;\theta^*),f(y|x;\theta); (1-\epsilon(x)) g(x))  \\
& \qquad  \qquad - \frac{\int \nu_{f_{\theta},\gamma}(x)^{\gamma} \epsilon(x)g(x) dx  }{\left\{  \sigma^{-\gamma} \int s(z)^{1+\gamma} dz \right\}^{\frac{\gamma}{1+\gamma} }  }  .
\end{align*}

\end{proof}
\fi
\end{prop}

The second term $\frac{\int \nu_{f_{\theta},\gamma}(x)^{\gamma} \epsilon(x)g(x) dx  }{\left\{  \sigma^{-\gamma} \int s(z)^{1+\gamma} dz \right\}^{\frac{\gamma}{1+\gamma} }  } $ can be approximated to be $0$ from the condition $\nu_{f_{\theta},\gamma} \approx 0$ and $\epsilon(x) < 1 $ as follows: 
\begin{align}\label{heter assump}
\frac{\int \nu_{f_{\theta},\gamma}(x)^{\gamma} \epsilon(x)g(x) dx  }{\left\{  \sigma^{-\gamma} \int s(z)^{1+\gamma} dz \right\}^{\frac{\gamma}{1+\gamma} }  } & < \frac{\int \nu_{f_{\theta},\gamma}(x)^{\gamma} g(x) dx  }{ \left\{  \sigma^{-\gamma} \int s(z)^{1+\gamma} dz \right\}^{\frac{\gamma}{1+\gamma} }  } \nonumber \\
&=\frac{ \nu_{f_{\theta},\gamma }^{\gamma} }{ \left\{  \sigma^{-\gamma} \int s(z)^{1+\gamma} dz \right\}^{\frac{\gamma}{1+\gamma} }  } \nonumber \\
& \approx 0 
\end{align}
%
%
%
We see from Proposition \ref{heter propos} and (\ref{heter assump}) that
\begin{align*}
\theta_{\gamma}^* & =  \argmin_{\theta} \tilde{d}_\gamma (g(y|x),f(y|x;\theta);g(x)) \\ 
\\
\\
&=  \argmin_{\theta} \left[ \left(1-\int \epsilon(x) g(x) dx \right)^{\frac{\gamma}{1+\gamma}} \tilde{d}_{\gamma} (f(y|x;\theta^*),f(y|x;\theta); (1-\epsilon(x)) g(x)) \right. \\
& \left. \qquad \qquad  \qquad - \frac{\int \nu_{f_{\theta},\gamma}(x)^{\gamma} \epsilon(x)g(x) dx  }{\left\{  \sigma^{-\gamma} \int s(z)^{1+\gamma} dz \right\}^{\frac{\gamma}{1+\gamma} }  } \right]   \\
& \approx  \argmin_{\theta} \left(1-\int \epsilon(x) g(x) dx \right)^{\frac{\gamma}{1+\gamma}} \tilde{d}_{\gamma} (f(y|x;\theta^*),f(y|x;\theta); (1-\epsilon(x)) g(x))  \\
& = \theta^* .
\end{align*} 
Therefore, under heterogeneous contamination in a location-scale family,  it can be expected that the latent bias $\theta_{\gamma}^* - \theta^*$ is  small even if $\epsilon(x)$ is not small. 
%
Moreover, we can show the following theorem, using Proposition \ref{heter propos}.
\begin{theorem}\label{hetero_pytha}
Let $\nu = \max\left\{\nu_{f_{\theta},\gamma} , \nu_{f_{\theta^*},\gamma} \right\}$. Then, the following relation among $g(y|x)$, $f(y|x;\theta^*)$, $f(y|x;\theta)$ approximately holds:
\begin{align*}
&D_{\gamma}(g(y|x),f(y|x;\theta);g(x)) \\
& - D_{\gamma}(g(y|x),f(y|x;\theta^*);g(x)) -D_{\gamma}(f(y|x;\theta^*),f(y|x;\theta);(1-\epsilon(x))g(x)) \\
&= O( \nu^{\gamma}).
\end{align*}
\end{theorem}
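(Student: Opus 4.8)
The plan is to imitate the proof of Theorem~\ref{pytha_homogene}, substituting Proposition~\ref{heter propos} for Proposition~\ref{homo propos}. First I would eliminate the self-entropy. Since the first two divergences on the left-hand side both have first argument $g(y|x)$ and weight $g(x)$, the term $-d_\gamma(g(y|x),g(y|x);g(x))$ appearing in the definition (\ref{gamma-div def}) is common to both and cancels in their difference, so that
\[
D_\gamma(g(y|x),f(y|x;\theta);g(x)) - D_\gamma(g(y|x),f(y|x;\theta^*);g(x)) = d_\gamma(g(y|x),f(y|x;\theta);g(x)) - d_\gamma(g(y|x),f(y|x;\theta^*);g(x)).
\]
It therefore suffices to express $d_\gamma(g(y|x),f(y|x;\theta);g(x))$ through the reference quantity $d_\gamma(f(y|x;\theta^*),f(y|x;\theta);(1-\epsilon(x))g(x))$ up to an error of order $\nu^\gamma$, and similarly with $\theta$ replaced by $\theta^*$.

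To pass between the two I would go through the monotone transform $\tilde d_\gamma$, which by its definition satisfies $d_\gamma = -\frac{1}{\gamma}\log(-\tilde d_\gamma)$. Abbreviate $a = (1-\int \epsilon(x)g(x)dx)^{\gamma/(1+\gamma)}$, set $T_\theta = -\tilde d_\gamma(f(y|x;\theta^*),f(y|x;\theta);(1-\epsilon(x))g(x)) > 0$, and let $R_\theta$ denote the second (subtracted) term of Proposition~\ref{heter propos}. That proposition then reads $-\tilde d_\gamma(g(y|x),f(y|x;\theta);g(x)) = a\,T_\theta + R_\theta$, whence
\[
d_\gamma(g(y|x),f(y|x;\theta);g(x)) = -\frac{1}{\gamma}\log\!\left(a\,T_\theta + R_\theta\right),
\]
while the same identity gives $-\frac{1}{\gamma}\log T_\theta = d_\gamma(f(y|x;\theta^*),f(y|x;\theta);(1-\epsilon(x))g(x))$.

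The crux is a first-order expansion of the logarithm, $\log(a T_\theta + R_\theta) = \log(a T_\theta) + \log(1 + R_\theta/(a T_\theta))$. By the estimate (\ref{heter assump}) we have $0 \le R_\theta \le \nu_{f_\theta,\gamma}^\gamma / \{\sigma^{-\gamma}\int s(z)^{1+\gamma}dz\}^{\gamma/(1+\gamma)} = O(\nu^\gamma)$, so provided $a T_\theta$ stays bounded away from $0$ we get $\log(1 + R_\theta/(a T_\theta)) = O(\nu^\gamma)$ and
\[
d_\gamma(g(y|x),f(y|x;\theta);g(x)) = -\frac{1}{\gamma}\log a + d_\gamma(f(y|x;\theta^*),f(y|x;\theta);(1-\epsilon(x))g(x)) + O(\nu^\gamma).
\]
The constant $-\frac{1}{\gamma}\log a$ does not depend on $\theta$. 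Applying this for $\theta$ and for $\theta^*$ (its residual $R_{\theta^*}$ being $O(\nu^\gamma)$ as well, since $\nu_{f_{\theta^*},\gamma}\le\nu$) and subtracting, the constant cancels and the difference of the two reference $d_\gamma$ terms collapses, by (\ref{gamma-div def}), to $D_\gamma(f(y|x;\theta^*),f(y|x;\theta);(1-\epsilon(x))g(x))$. Together with the reduction of the first paragraph this yields the asserted $O(\nu^\gamma)$ relation.

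I expect the main obstacle to be the uniform control of the logarithmic remainder, that is, securing a positive lower bound for $a\,T_\theta$ uniform over the relevant range of $\theta$, so that dividing by it does not amplify the $O(\nu^\gamma)$ error. Positivity of $a$ needs only $\int\epsilon(x)g(x)dx < 1$, whereas boundedness of $T_\theta$ away from $0$ amounts to an upper bound on $d_\gamma(f(y|x;\theta^*),f(y|x;\theta);(1-\epsilon(x))g(x))$; this is the heterogeneous analogue of the integrability requirements implicit in Theorem~\ref{pytha_homogene}. A minor point to verify is that $d_\gamma$ and $D_\gamma$ stay well defined when the $x$-weight $(1-\epsilon(x))g(x)$ is only a sub-probability measure, integrating to $1-\int\epsilon(x)g(x)dx$ rather than $1$; this causes no difficulty, since in (\ref{gamma-div emp}) the variable $x$ enters purely as a weighting measure.
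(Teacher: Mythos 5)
Your proposal is correct and follows essentially the same route as the paper's Appendix proof: both reduce the claim to the identity $d_{\gamma}(g(y|x),f(y|x;\theta);g(x)) = d_{\gamma}(f(y|x;\theta^*),f(y|x;\theta);(1-\epsilon(x))g(x)) - \tfrac{1}{1+\gamma}\log\left\{1-\int\epsilon(x)g(x)dx\right\} + O(\nu^{\gamma})$, apply it at $\theta$ and at $\theta^*$, and cancel the $\theta$-independent constant together with the self-entropy terms. The only differences are cosmetic: you invoke Proposition \ref{heter propos} and the monotone transform $\tilde{d}_{\gamma}$ as a black box where the paper re-derives the contamination decomposition directly on the integrals (bounding the $\delta$-term via $\epsilon(x)<1/2$ rather than via (\ref{heter assump})), and you make explicit the regularity condition---the leading term must stay bounded away from zero when the $O(\nu^{\gamma})$ error is pulled through the logarithm---which the paper's proof uses implicitly.
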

%
The proof is in the Appendix. 
The above is slightly different from a conventional Pythagorean relation,  because the base measure changes from $g(x)$ to $(1-\epsilon(x))g(x)$ in part. 
But, it also implies that the minimization of the divergence from $f(y|x;\theta)$ to the underlying conditional probability density function $g(y|x)$ is approximately the same as that to the target conditional probability density function $f(y|x;\theta^*)$.
Therefore, under heterogeneous contamination in a location-scale family, we can see why our proposed method works well in terms of minimization of $\gamma$-divergence.

\section{Sparse $\gamma$-regression}

\subsection{Estimation for sparse $\gamma$-regression}
The empirical estimation of the $\gamma$-cross entropy with a penalty term can be given by 
\begin{align*}
L_{\gamma}(\theta;\lambda) = \bar{d}_{\gamma}(f(y|x;\theta)) + \lambda P(\theta),
\end{align*}
where $P(\theta)$ is a penalty for parameter $\theta$ and $\lambda$ is a tuning parameter for the penalty term. 
As an example of penalty term, we consider $L_1$ (Lasso, \citealt {Tibshirani94regressionshrinkage}), elasticnet \citep{Zou05regularizationand}, and so on. 
The sparse $\gamma$-estimator can be proposed by
\begin{align}
 \hat{ \theta}_{S} = \argmin_{\theta} L_{\gamma}(\theta;\lambda) .
\end{align}
To obtain the minimizer, we propose the iterative algorithm by Majorization-Minimization algorithm (MM algorithm) \citep{hunter:mm}.
\subsection{MM algorithm for sparse $\gamma$-regression}%
The MM algorithm is constructed as follows. 
Let $h(\eta)$ be the objective function. 
Let us prepare the majorization function $h_{MM}$ satisfying
\begin{align*}
h_{MM}(\eta^{(m)}|\eta^{(m)})& = h(\eta^{(m)}), \\
h_{MM}(\eta|\eta^{(m)}) &\geq h(\eta) \ \ \mbox{ for all }  \eta,
\end{align*}
where $\eta^{(m)}$ is the parameter of the $m$-th iterative step for $m=0,1,2,\ldots$.  
Let us consider the iterative algorithm by
\begin{align*}
\eta^{(m+1)} = \argmin_{\eta} h_{MM}(\eta|\eta^{(m)}).
\end{align*}
Then, we can show that the objective function $h(\eta)$ monotonically decreases at each step, because
\begin{align*}
h(\eta^{(m)}) = h_{MM}(\eta^{(m)}|\eta^{(m)}) & \geq h_{MM}(\eta^{(m+1)}|\eta^{(m)}) \geq h(\eta^{(m+1)}).
\end{align*}
Note that $\eta^{(m+1)}$ is not necessary to be minimizer of $h_{MM}(\eta|\eta^{(m)})$. 
We only need $h_{MM}(\eta^{(m)}|\eta^{(m)}) \geq h_{MM}(\eta^{(m+1)}|\eta^{(m)})$. 

We construct the majorization function for the sparse $\gamma$-regression by the following inequality: 
\begin{align}\label{mm_ineq}
\kappa(z^{T}\eta) & \leq  \sum_{i} \frac{z_i \eta^{(m)}_i }{ z^{T}\eta^{(m)}} \kappa \left[ \eta_i \frac{z^{T}\eta^{(m)} }{\eta^{(m)}_i} \right],
\end{align}
where $\kappa(u)$ is a convex function, $z=(z_1,\ldots,z_n)^T$, $\eta=(\eta_1,\ldots,\eta_n)^T$, $\eta^{(m)}=(\eta_1^{(m)},\ldots,\eta_n^{(m)})^T$ and $z_i$, $\eta_i$ and $\eta^{(m)}_i$ are positive. 
The inequality (\ref{mm_ineq}) holds from Jensen's inequality. 
Here, we take $z_i=\frac{1}{n}$, $\eta_i = f(y_i | x_i ;\theta)^{\gamma}$, $\eta_i^{(m)} = f(y_i | x_i ;\theta^{(m)})^{\gamma}$ and $\kappa(u) =- \log u$ in (\ref{mm_ineq}). 
We can propose the majorization function as follows:
\begin{align*}
& h(\theta) \\
&= L_{\gamma}(\theta ; \lambda) \\
&= - \frac{1}{\gamma} \log \left\{ \frac{1}{n} \sum_{i=1}^n  f(y_i | x_i ;\theta)^{\gamma} \right\}  + \frac{1}{1+\gamma} \log \left\{ \frac{1}{n} \sum_{i=1}^n \int f(y | x_i ;\theta)^{1+\gamma} dy \right\}    + \lambda P(\theta)  \\
 & \leq  - \frac{1}{\gamma} \sum_{i=1}^n \frac{ \frac{1}{n} f(y_i | x_i ;\theta^{(m)})^{\gamma}  }  { \frac{1}{n} \sum_{l=1}^n  f(y_l | x_l ;\theta^{(m)})^{\gamma}  } \log  \left\{  f(y_i | x_i ;\theta)^{\gamma}  \frac{\frac{1}{n} \sum_{l=1}^n  f(y_l | x_l ;\theta^{(m)})^{\gamma}}{ f(y_i | x_i ;\theta^{(m)})^{\gamma}}  \right\}  \\
 & \qquad  \qquad  \qquad + \frac{1}{1+\gamma} \log \left\{ \frac{1}{n} \sum_{i=1}^n \int f(y | x_i ;\theta)^{1+\gamma} dy \right\}  + \lambda P(\theta) \\
&= - \sum_{i=1}^n \alpha^{(m)}_i \log f(y_i | x_i ;\theta) +  \frac{1}{1+\gamma} \log \left\{ \frac{1}{n} \sum_{i=1}^n \int f(y | x_i ;\theta)^{1+\gamma} dy \right\} \\
& \qquad \qquad \qquad + \lambda P(\theta) +const \\
&=h_{MM}(\theta|\theta^{(m)}) + const ,
\end{align*}
where $\alpha^{(m)}_{i}= f(y_i | x_i ;\theta^{(m)})^{\gamma} \bigg \slash \sum_{l=1}^n  f(y_l | x_l ;\theta^{(m)})^{\gamma}$ and $const$ is a term which does not depend on the parameter $\theta$.

The first term on the original target function $h(\theta)$ is a mixture type of densities, which is not easy to be optimized, while the first term on $h_{MM}(\theta | \theta^{(m)})$ is a weighted log-likelihood, which is often easy to be optimized. 
\subsection{Sparse $\gamma$-linear regression}\label{update gam linear}
Let $f(y|x;\theta)$ be the conditional density with $\theta=(\beta_0, \beta, \sigma^2)$, given by 
\begin{gather*}
f(y|x;\theta)=\phi(y;\beta_0+x^T\beta,\sigma^2),
\end{gather*} 
where $\phi(y;\mu,\sigma^2)$ is the normal density with mean parameter $\mu$ and variance parameter $\sigma^2$. 
Suppose that $P(\theta)$ is the $L_1$ regularization $||\beta ||_1$.
Then, after a simple calculation, we have 
\begin{align}
& h_{MM}(\theta|\theta^{(m)})= \frac{1}{2(1+\gamma)}  \log \sigma^{2} + \frac{1}{2} \sum_{i=1}^n \alpha^{(m)}_i \frac{(y_i -\beta_0 - x_{i}^{T}\beta )^{2}}{\sigma^{2}} +\lambda ||\beta||_1.
\end{align}
This function is easy to be optimized by an update algorithm. 
For a fixed value of $\sigma^2$, the function $h_{MM}$ is almost the same as Lasso except for the weight, so that it can be updated using the coordinate decent algorithm with a decreasing property of the loss function. 
For a fixed value of $(\beta_0,\beta^T)^T$, the function $h_{MM}$ is easy  to be minimized. 
Consequently, the update formula can be obtained by
\begin{align}
\beta_0^{(m+1)} &=  \sum_{i=1}^n\alpha_i^{(m)} (y_i-{x_i}^T \beta^{(m)}) ,\\
\beta^{(m+1)}_{j} &=  S\left(\sum_{i=1}^n \alpha^{(m)}_{i}(y_i-\beta^{(m+1)}_0 -r^{[-j]}_{i})x_{ij} ,{\sigma^2}^{(m)}\lambda \right) \bigg/  \left( \sum_{i=1}^n \alpha^{(m)}_i x^{2}_{ij} \right)  \quad(j=1,\ldots,p), \\
{\sigma^2}^{(m+1)}&= (1+\gamma) \sum_{i=1}^n \alpha^{(m)}_i (y_i -\beta^{(m+1)}_0 - x^{T}_i \beta^{(m+1)})^{2},
\end{align}
where $S(t,\lambda)=\textrm{sign}(t)(|t|-\lambda)_+$ and $r^{[-j]}_{i}=\sum_{k \neq j} x_{ik}\beta_k$, and the decreasing property holds: 
\begin{align}\label{dec pro}
h_{MM} (\theta^{(m+1)} | \theta^{(m)} ) \leq h_{MM} (\theta^{(m)} | \theta^{(m)} ).
\end{align}
It should be noted that $h_{MM}$ is convex with respect to parameter $\beta_0$, $\beta$ and has the global minimum with respect to parameter $\sigma^2$, but not convex with respect to some of them, so that the starting value of update formula is important. 
This issue is often important in robust statistics and discussed in Sec \ref{opt}. 
%

%
%
%
%

Moreover, we explain how to implement coordinate decent algorithm. 
In practice, we also use active set strategy \citep{friedman2007pathwise}.
Specifically, for a given $\beta^{(m)}$, we only update the non-zero coordinates of $\beta^{(m)}$ in the active set, until they are converged. 
Then, the non-active set parameter estimates are updated once. 
When they remain zero, the coordinate descent algorithm stops.

\subsection{Robust cross-validation}
In sparse regression, a regularization parameter is often selected via some criterion. 
Cross-validation is often used for selecting the regularization parameter. 
However, the ordinal cross-validation will fail due to outliers.
Therefore, we propose the robust cross-validation based on the $\gamma$-cross entropy. 
Let $\hat{\theta}_{\gamma}$ be the robust estimate based on the $\gamma$-cross entropy. 
The ordinal cross validation is based on the squared error and it can also be constructed using the KL-cross entropy with normal density. 
The cross-validation based on the $\gamma$-cross entropy can be given by 
\begin{align*}
\mbox{RoCV}(\lambda) &= -\frac{1}{\gamma_0} \log \left\{ \frac{1}{n} \sum_{i=1}^n f(y_i|x_i ;\hat{\theta}_{\gamma}^{[-i]} )^{\gamma_0} \right\} \\ 
& \quad + \frac{1}{1+\gamma_0} \log \left\{ \frac{1}{n} \sum_{i=1}^n \int f(y|x_i; \hat{\theta}_{\gamma}^{[-i]} )^{1+\gamma_0} dy  \right\},
\end{align*}
where $\hat{\theta}_{\gamma}^{[-i]}$ is the $\gamma$-estimator deleting the $i$-th observation and $\gamma_0$ is an appropriate tuning parameter. 
We can also adopt the $K$-fold cross validation to reduce the computational task \citep{hastie_09_elements-of.statistical-learning}. 

Here, we give a small modification on the above. 
We often focus only on the mean structure for prediction, not on the variance parameter. 
Therefore, in this paper, $\hat{\theta}_{\gamma}^{[-i]}$ is replaced by $(\hat{\beta}_{\gamma}^{[-i]}, \hat{\sigma}_{\gamma}^2 )$.

\subsection{Redescending Property}

First, we review a redescending property on M-estimation (see, e.g.,  \citealt{maronna:martin:yohai:2006}), which is often used in robust statistics. 
Suppose that the estimating equation is given by $\sum_{i=1}^n \zeta(z_i;\theta)=0$. Let $\hat{\theta}$ be a solution of the estimating equation. 
The bias caused by outlier $z_o$ is expressed as $\hat{\theta}_{n=\infty} - \theta^*$, where $\hat{\theta}_{n=\infty}$ is the limiting value of $\hat{\theta}$ and $\theta^*$ is the true parameter. 
We hope the bias is small even if the outlier $z_o$ exists. 
Under some conditions, the bias can be approximated to $\epsilon {\rm IF}(z_o;\theta^*)$, where $\epsilon$ is a small outlier ratio and ${\rm IF}(z;\theta^*)$ is the influence function. 
The bias is expected to be small when the influence function is small. 
The influence function can be expressed as ${\rm IF}(z;\theta^*)=A\zeta(z;\theta^*)$, where $A$ is a matrix independent of $z$, so that the bias is also expected to be small when $\zeta(z_o;\theta^*)$ is small. 
In particular, the estimating equation is said to have a redescending property if $\zeta(z;\theta^*)$ goes to zero as $||z ||$ goes to infinity. 
This property is favorable in robust statistics, because the bias is expected to be sufficiently small when $z_o$ is very large. 

Here, we prove a redescending property on the sparse $\gamma$-linear regression, i.e., when $f(y|x;\theta) = \phi(y;\beta_0+x^T\beta,\sigma^2)$ with $\theta=(\beta_0,\beta,\sigma^2)$ for fixed $x$. 
%
%
%
Recall that the estimate of the sparse $\gamma$-linear regression is the minimizer of the loss function 
\begin{align*}
L_{\gamma}(\theta;\lambda)= -\frac{1}{\gamma} \log \left\{ \frac{1}{n}\sum_{i=1}^n \phi(y_i;\beta_0+{x_i}^T\beta,\sigma^2)^{\gamma} \right\} +b_{\gamma}(\theta;\lambda),
\end{align*}
where
\begin{align*}
b_{\gamma} (\theta;\lambda)= \frac{1}{1+\gamma} \log \left\{ \frac{1}{n}  \sum_{i=1}^n \int \phi(y;\beta_0+{x_i}^T\beta,\sigma^2)^{1+\gamma} dy \right\}+ \lambda || \beta ||_1  . 
\end{align*}
Then, the estimating equation is given by
\begin{align*}
0 &= \frac{\partial}{\partial\theta} L_\gamma(\theta;\lambda) \\
 &= - \frac{\sum_{i=1}^n \phi(y_i;\beta_0+{x_i}^T\beta,\sigma^2)^\gamma s (y_i|x_i;\theta)}{\sum_{i=1}^n \phi(y_i;\beta_0+{x_i}^T\beta,\sigma^2)^\gamma } + \frac{\partial}{\partial \theta} b_\gamma(\theta;\lambda),
\end{align*}
where
\begin{align*}
s(y|x;\theta) = \frac{\partial \log \phi(y;\beta_0+x^T\beta,\sigma^2)}{ \partial \theta}. 
\end{align*}
This can be expressed by the M-estimation formula given by
\begin{align}
0 = \sum_{i=1}^n \psi(y_i|x_i;\theta),
\label{eq:MEE}
\end{align}
where
\begin{align*}
\psi(y|x;\theta) = \phi(y;\beta_0+x^T\beta,\sigma^2)^\gamma s (y|x;\theta) - \phi(y;\beta_0+x^T\beta,\sigma^2)^\gamma \frac{\partial}{\partial \theta} b_\gamma(\theta;\lambda).
\end{align*}
We can easily show that as $||y||$ goes to infinity, $\phi(y;\beta_0+x^T\beta,\sigma^2)$ goes to zero and $\phi(y;\beta_0+x^T\beta,\sigma^2) s (y|x;\theta)$ also goes to zero. 
\if0
\begin{align*}
\phi(y;\beta_0+x^T\beta,\sigma^2) \frac{\partial \log \phi(y;\beta_0+x^T\beta,\sigma^2)}{ \partial \beta_0} &= \frac{1}{ \sqrt{2\pi \sigma^2} }  \exp\left\{ - \frac{(y-\beta_0 - x^T \beta)^2}{2\sigma^2}  \right\} \\
 &  \qquad \qquad \times \left(  \frac{y-\beta_0-x^T\beta}{\sigma^2} \right)  \ \rightarrow 0, \\
\phi(y;\beta_0+x^T\beta,\sigma^2)\frac{\partial \log \phi(y;\beta_0+x^T\beta,\sigma^2)}{ \partial \beta_j} &= \frac{1}{ \sqrt{2\pi \sigma^2} }  \exp\left\{ - \frac{(y-\beta_0 - x^T \beta)^2}{2\sigma^2}  \right\} \\
&   \times \left( x_j \frac{y-\beta_0-x^T\beta}{\sigma^2} \right)  \ \rightarrow 0 \quad (j=1, \ldots,p), \\
\phi(y;\beta_0+x^T\beta,\sigma^2) \frac{\partial \log \phi(y;\beta_0+x^T\beta,\sigma^2)}{ \partial \sigma^2 } &= \frac{1}{ \sqrt{2\pi \sigma^2} }  \exp\left\{ - \frac{(y-\beta_0 - x^T \beta)^2}{2\sigma^2}  \right\} \\
&   \times \left\{  -\frac{1}{2\sigma^2} + \frac{(y-\beta_0-x^T\beta)^2}{\sigma^4} \right\}  \ \rightarrow 0 .
\end{align*}
\fi
Therefore, the function $\psi(y|x;\theta)$ goes to zero as $||y||$ goes to infinity, so that the estimating equation has a redescending property.

\section{Numerical experiment}\label{experiment}
In this section, we compare our method (Sparse $\gamma$-linear regression) with the representative sparse linear regression method, {\it Least absolute shrinkage and selection operator} (Lasso)\citep{Tibshirani94regressionshrinkage}, and the robust and sparse regression methods, {\it sparse least trimmed squares} (sLTS)\citep{alfons2013} and {\it robust least angle regression} (RLARS)\citep{RePEc:bes:jnlasa:v:102:y:2007:m:december:p:1289-1299}.
\subsection{Regression models for simulation}
We used the simulation model given by 
\begin{gather*}
y=\beta_0+\beta_1 x_1 + \beta_2 x_2+ \cdots +\beta_p x_p + e, \quad e \sim N(0,0.5^2). 
\end{gather*}
The sample size and the number of explanatory variables were set to be $n=100$ and $p=100,200$, respectively. 
The true coefficients were given by 
\begin{gather}
\beta_1 =1 ,\ \beta_2 = 2,\ \beta_4 = 4,\ \beta_7 = 7,\ \beta_{11} =11,\nonumber \\
 \beta_j =0 \ \mbox{for} \ j \in \{0, \ldots ,p \} \backslash \{1,2,4,7,11\}. 
\end{gather}
We arranged a broad range of regression coefficients to observe sparsity for various degrees of regression coefficients. 
The explanatory variables were generated from a normal distribution $N(0,\Sigma)$, where the covariance matrix $\Sigma=(\sigma_{ij} )_{1 \leq i,j \leq p }$ was given by $\sigma_{ij}=\rho^{|i-j|}$. 
We generated 100 random samples. 

Outliers were incorporated into simulations. 
We investigated two cases for outlier ratio ($\epsilon=0.1 \mbox{ and } 0.3$) and two cases for outlier pattern: (a) The outliers were generated around the middle  part of the explanatory variable, where the explanatory variables were generated from $N(0,0.5^2)$ and the error terms were generated from $N(20,0.5^2)$. 
(b) The outliers were generated around the edge part of the explanatory variable, where the explanatory variables were generated from $N(-1.5,0.5^2)$ and the error terms were generated from $N(20,0.5^2)$. 

\subsection{Performance measure}
The root mean squared prediction error (RMSPE) and mean squared error (MSE) were examined to verify predictive performance and fitness of regression coefficient: 
\begin{align*}
\textrm{RMSPE}(\hat{\beta}) &= \sqrt{\frac{1}{n}\sum_{i=1}^n (y_i^{*}- {x_i^{*}}^T \hat{\beta} )^2 }, \\
\textrm{MSE} &= \frac{1}{p+1} \sum_{j=0}^p (\beta_j^* - \hat{\beta_j})^2, \\
\end{align*}
where $y_i^*$ and $x_i^*$($i=1, \ldots ,n$) is the test sample generated from the simulation model without outliers and $\beta_j^*$'s are the true coefficients. 
The true positive rate (TPR) and true negative rate (TNR) were also reported to verify the sparsity: 
\begin{align*}
\textrm{TPR}(\hat{\beta}) &= \frac{| \{ j \in \{1,\ldots,p\}: \hat{\beta_j} \neq 0 \wedge  \beta_j^* \neq 0 \} |}{|\{ j \in \{1,\ldots,p\}:\beta_j^* \neq 0 \}|},\\
\textrm{TNR}(\hat{\beta}) &= \frac{| \{ j \in \{1,\ldots,p\}: \hat{\beta_j} = 0 \wedge  \beta_j^* = 0 \} |}{|\{ j \in \{1,\ldots,p\}:\beta_j^* = 0 \}|}.
\end{align*}
\subsection{Comparative methods}\label{init}
In this subsection, we explain three comparative methods; Lasso, RLARS, sLTS.

The Lasso is performed by R-package "glmnet". 
The regularization parameter $\lambda_{lasso}$ is selected by grid search via cross-validation in "glmnet". 
We used "glmnet" by default. 

The RLARS is performed by R-package "robustHD". 
This is a robust version of LARS \citep{efron2004}. 
The optimal model is selected via BIC by default. 
%
%

The sLTS is performed by R-package "robustHD". 
The sLTS has the regularization parameter $\lambda_{sLTS}$ and the fraction parameter  $\alpha$ of squared residuals used for trimmed squares. 
First, the regularization parameter $\lambda_{sLTS}$ is selected by grid search via BIC. 
The number of grids is 40 by default. 
However, we considered that this would be small under heavy contamination. 
Therefore, we used 80 grids under heavy contamination. 
Next, the fraction parameter $\alpha$ is 0.75 by default. 
In the case of $\alpha=0.75$, the ratio of outlier is less than 25\%. 
We considered this would be small under heavy contamination and large under low contamination in terms of statistical efficiency. 
Therefore, we used 0.65, 0.75, 0.85 as $\alpha$ under low contamination and 0.50, 0.65, 0.75 under heavy contamination.

\subsection{Details of our method}\label{opt}
\subsubsection{Initial points}
In our method, we need an initial point to obtain the estimate, because we use the iterative algorithm proposed in Sec \ref{update gam linear}. 
The estimate of other conventional robust and sparse regression methods  would give a good initial point. 
For another choice, the estimate of the RANSAC (random sample consensus) algorithm would also give a good initial point. 
In this experiment, we used the estimate of the sLTS as an initial point. 
%


\subsubsection{How to choose tuning parameters}
In our method, we have to choose some tuning parameters. 
The parameter $\gamma$ in $\gamma$-divergence was set to $0.1$ or $0.5$. 
The parameter $\gamma_0$ in robust cross-validation was set to $0.5$. 
In our experience, the result via Rocv is not sensitive to the selection of $\gamma_{0}$ when $\gamma_{0}$ is enough large, e.g. $\gamma_{0} =0.5,1$. 
The parameter $\lambda$ of $L_1$ regularization is often selected via grid search. 
We used 50 grids on range $[0.05\lambda_0,\lambda_0]$ with log scale, where $\lambda_0$ is an estimate of $\lambda$ which would shrink regression coefficients to zero.

\subsection{Result}
Table 1 is the low contamination case with outlier pattern (a). 
For the RMSPE, our method outperformed other comparative methods (the oracle value of the RMSPE is 0.5). 
For the TPR and TNR, the sLTS showed a similar performance to our method. 
The Lasso presented the worst performance, because it is sensitive to outliers. 
Table 2 is the heavy contamination case with outlier pattern (a). 
For the RMSPE, our method outperformed other comparative methods except in the case ($p$, $\epsilon$, $\rho$) $=$ (100, 0.3, 0.2) for sLTS with $\alpha=0.5$. 
The Lasso also presented a worse performance and furthermore the sLTS with $\alpha=0.75$ showed the worst performance due to a lack of truncation. 
For the TPR and TNR, our method showed the best performance. 

\begin{table}[H]
\caption{The outlier pattern (a) with $p=100, \ 200$, $\epsilon=0.1$ and $\rho=0.2, \ 0.5$  }
 \scalebox{0.8}{
\hspace{-1.2cm} 
 \begin{tabular}{c | c c c c | c c c c }    
 \multicolumn{1}{c}{} & \multicolumn{4}{c}{$p=100$, $\epsilon=0.1$, $\rho=0.2$}& \multicolumn{4}{c}{$p=100$, $\epsilon=0.1$, $\rho=0.5$} \\
Methods  & RMSPE & MSE & TPR & TNR  & RMSPE & MSE & TPR & TNR \\ \hline
Lasso  & 3.04 & 9.72$\times10^{-2}$ & 0.936 & 0.909 & 3.1 & 1.05$\times10^{-1}$ & 0.952 & 0.918 \\
RLARS & 0.806 & 6.46$\times10^{-3}$ &  0.936 & 0.949 & 0.718  & 6.7$\times10^{-3}$ &  0.944 & 0.962 \\
sLTS($\alpha=0.85$,80grids)  & 0.626 & 1.34$\times10^{-3}$ & 1.0  & 0.964 & 0.599 & 1.05$\times10^{-3}$ &  1.0 & 0.966 \\
sLTS($\alpha=0.75$,80grids) & 0.651 & 1.71$\times10^{-3}$ &  1.0 & 0.961 & 0.623 & 1.33$\times10^{-3}$ &  1.0 & 0.961 \\
sLTS($\alpha=0.65$,80grids) & 0.685 & 2.31$\times10^{-3}$ & 1.0 & 0.957 & 0.668 & 1.76$\times10^{-3}$ & 1.0 & 0.961  \\
sparse $\gamma$-linear reg($\gamma$=0.1)&  0.557 & 6.71$\times10^{-4}$ &  1.0 & 0.966 &  0.561 & 6.99$\times10^{-4}$ &  1.0 & 0.965\\
sparse $\gamma$-linear reg($\gamma$=0.5) & 0.575 & 8.25$\times10^{-4}$ & 1.0 & 0.961 & 0.573 & 9.05$\times10^{-4}$ & 1.0 & 0.959 \\

  \multicolumn{1}{c}{} & \multicolumn{4}{c}{$p=200$, $\epsilon=0.1$, $\rho=0.2$}& \multicolumn{4}{c}{$p=200$, $\epsilon=0.1$, $\rho=0.5$} \\
Methods &   RMSPE & MSE & TPR & TNR &   RMSPE & MSE & TPR & TNR \\ \hline
Lasso   &  3.55 & 6.28$\times10^{-2}$ & 0.904 & 0.956 &  3.37 & 6.08$\times10^{-2}$ & 0.928 & 0.961 \\  
RLARS  &  0.88 & 3.8$\times10^{-3}$ & 0.904 & 0.977&  0.843 & 4.46$\times10^{-3}$ & 0.9 & 0.986\\ 
sLTS($\alpha=0.85$,80grids) & 0.631  & 7.48$\times10^{-4}$ & 1.0 & 0.972 &0.614  & 5.77$\times10^{-4}$ & 1.0 & 0.976  \\  
sLTS($\alpha=0.75$,80grids)  &  0.677 & 1.03$\times10^{-3}$ & 1.0 & 0.966 &  0.632 & 7.08$\times10^{-4}$ & 1.0 & 0.973 \\
sLTS($\alpha=0.65$,80grids) &   0.823 & 2.34$\times10^{-3}$ & 0.998 & 0.96&   0.7 & 1.25$\times10^{-3}$ & 1.0 & 0.967\\
sparse $\gamma$-linear reg($\gamma=0.1$)  &  0.58 & 4.19$\times10^{-4}$ & 1.0 & 0.981&  0.557 & 3.71$\times10^{-4}$ & 1.0 & 0.977 \\
sparse $\gamma$-linear reg($\gamma=0.5$) &  0.589 & 5.15$\times10^{-4}$  & 1.0 & 0.979 &  0.586 & 5.13$\times10^{-4}$  & 1.0 & 0.977  \\
\end{tabular}
}

\end{table}

\begin{table}[H]
\caption{The outlier pattern (a) with $p=100, \ 200$, $\epsilon=0.3$ and $\rho=0.2, \ 0.5$  }
 \scalebox{0.8}{
\hspace{-1.2cm} 
 \begin{tabular}{c | c c c c | c c c c }    
 \multicolumn{1}{c}{} & \multicolumn{4}{c}{$p=100$, $\epsilon=0.3$, $\rho=0.2$}& \multicolumn{4}{c}{$p=100$, $\epsilon=0.3$, $\rho=0.5$} \\
Methods  & RMSPE & MSE & TPR & TNR  & RMSPE & MSE & TPR & TNR \\ \hline
Lasso  & 8.07 & 6.72$\times10^{-1}$ & 0.806 & 0.903  &  8.1 & 3.32$\times10^{-1}$ & 0.8 & 0.952 \\
RLARS & 2.65  & 1.54$\times10^{-1}$ &  0.75 & 0.963  & 2.09  & 1.17$\times10^{-1}$ &  0.812 & 0.966 \\
sLTS($\alpha=0.75$,80grids)  & 10.4 & 2.08 &  0.886 & 0.709 &  11.7 & 2.36 &  0.854 & 0.67  \\
sLTS($\alpha=0.65$,80grids)  & 2.12 & 3.66$\times10^{-1}$ &  0.972 & 0.899 & 2.89 & 5.13$\times10^{-1}$ &  0.966 & 0.887 \\
sLTS($\alpha=0.5$,80grids) & 1.37 & 1.46$\times10^{-1}$ & 0.984 & 0.896 & 1.53 & 1.97$\times10^{-1}$ & 0.976 & 0.909  \\
sparse $\gamma$-linear reg($\gamma$=0.1)&  1.13 & 9.16$\times10^{-2}$ &  0.964 & 0.97 &  0.961 & 5.38$\times10^{-2}$ &  0.982 & 0.977 \\
sparse $\gamma$-linear reg($\gamma$=0.5) &  1.28 & 1.5$\times10^{-1}$ & 0.986 & 0.952  &1.00 & 8.48$\times10^{-2}$ & 0.988 & 0.958 \\

  \multicolumn{1}{c}{} & \multicolumn{4}{c}{$p=200$, $\epsilon=0.3$, $\rho=0.2$}& \multicolumn{4}{c}{$p=200$, $\epsilon=0.3$, $\rho=0.5$} \\
Methods &   RMSPE & MSE & TPR & TNR &   RMSPE & MSE & TPR & TNR \\ \hline
Lasso   &  8.11 & 3.4$\times10^{-1}$ & 0.77 & 0.951 & 8.02 & 6.51$\times10^{-1}$ & 0.81 & 0.91  \\  
RLARS  & 3.6 & 1.7$\times10^{-1}$ & 0.71 & 0.978 & 2.67 & 1.02$\times10^{-1}$ & 0.76 & 0.984 \\ 
sLTS($\alpha=0.75$,80grids) & 11.5  & 1.16 & 0.738 & 0.809 & 11.9  & 1.17& 0.78 & 0.811  \\  
sLTS($\alpha=0.65$,80grids)  &  3.34 & 3.01$\times10^{-1}$ & 0.94 & 0.929 & 4.22 & 4.08$\times10^{-1}$ & 0.928 & 0.924 \\
sLTS($\alpha=0.5$,80grids) &   4.02 & 3.33$\times10^{-1}$ & 0.892 & 0.903 &    4.94 & 4.44$\times10^{-1}$ & 0.842 & 0.909\\
sparse $\gamma$-linear reg($\gamma=0.1$)  &   2.03 & 1.45$\times10^{-1}$ & 0.964 & 0.924  &  3.2 & 2.86$\times10^{-1}$  & 0.94 & 0.936  \\
sparse $\gamma$-linear reg($\gamma=0.5$) & 1.23 & 7.69$\times10^{-2}$  & 0.988 & 0.942  &    3.13 & 2.98$\times10^{-1}$ & 0.944 & 0.94  \\
\end{tabular}
}

\end{table}

Table 3 is the low contamination case with outlier pattern (b). 
For the RMSPE, our method outperformed other comparative methods (the oracle value of the RMSPE is 0.5). 
For the TPR and TNR, the sLTS showed a similar performance to our method. 
The Lasso presented the worst performance, because it is sensitive to outliers. 
Table 4 is the heavy contamination case with outlier pattern (b). 
For the RMSPE, our method outperformed other comparative methods. 
The sLTS with $\alpha=0.5$ showed the worst performance. 
For the TPR and TNR, it seems that our method showed the best performance.

%
%
%
%
%
%
%
%
%
%
%
%

\begin{table}[H]
\caption{The outlier pattern (b) with $p=100, \ 200$, $\epsilon=0.1$ and $\rho=0.2, \ 0.5$  }
 \scalebox{0.8}{
\hspace{-1.2cm} 
 \begin{tabular}{c | c c c c | c c c c }    
 \multicolumn{1}{c}{} & \multicolumn{4}{c}{$p=100$, $\epsilon=0.1$, $\rho=0.2$}& \multicolumn{4}{c}{$p=100$, $\epsilon=0.1$, $\rho=0.5$} \\
Methods  & RMSPE & MSE & TPR & TNR  & RMSPE & MSE & TPR & TNR \\ \hline
Lasso  & 2.48 & 5.31$\times10^{-2}$ & 0.982 &0.518 & 2.84 & 5.91$\times10^{-2}$  & 0.98 & 0.565 \\
RLARS & 0.85  & 6.58$\times10^{-3}$ &  0.93 & 0.827 & 0.829 & 7.97$\times10^{-3}$  & 0.91 & 0.885\\
sLTS($\alpha=0.85$,80grids)  & 0.734 & 5.21$\times10^{-3}$ &  0.998 & 0.964 &  0.684 & 3.76$\times10^{-3}$ & 1.0 & 0.961 \\
sLTS($\alpha=0.75$,80grids) &0.66 & 1.78$\times10^{-3}$ &  1.0 & 0.975  &  0.648 & 1.59$\times10^{-3}$  & 1.0 & 0.961 \\
sLTS($\alpha=0.65$,80grids) & 0.734 & 2.9$\times10^{-3}$ & 1.0 & 0.96 &  0.66 & 1.74$\times10^{-3}$    &1.0  & 0.962  \\
sparse $\gamma$-linear reg($\gamma$=0.1)&  0.577 & 8.54$\times10^{-4}$ &  1.0 & 0.894 &   0.545  & 5.44$\times10^{-4}$  & 1.0 & 0.975\\
sparse $\gamma$-linear reg($\gamma$=0.5) & 0.581 &7.96$\times10^{-4}$ & 1.0 & 0.971 & 0.546 & 5.95$\times10^{-4}$  & 1.0 & 0.977 \\

  \multicolumn{1}{c}{} & \multicolumn{4}{c}{$p=200$, $\epsilon=0.1$, $\rho=0.2$}& \multicolumn{4}{c}{$p=200$, $\epsilon=0.1$, $\rho=0.5$} \\
Methods &   RMSPE & MSE & TPR & TNR &   RMSPE & MSE & TPR & TNR \\ \hline
Lasso   &  2.39 & 2.57$\times10^{-2}$ & 0.988 & 0.696  &  2.57 & 2.54$\times10^{-2}$  & 0.944 & 0.706 \\  
RLARS  &  1.01 & 5.44$\times10^{-3}$ & 0.896 & 0.923&  0.877 & 4.82$\times10^{-3}$  & 0.898 & 0.94\\ 
sLTS($\alpha=0.85$,80grids) & 0.708  & 1.91$\times10^{-3}$ & 1.0 & 0.975 &0.790  & 3.40$\times10^{-3}$   &0.994  &0.97  \\  
sLTS($\alpha=0.75$,80grids)  &  0.683 & 1.06$\times10^{-4}$ & 1.0 & 0.975 &   0.635& 7.40$\times10^{-4}$   & 1.0 & 0.977 \\
sLTS($\alpha=0.65$,80grids) &  1.11 & 1.13$\times10^{-2}$ & 0.984 & 0.956 &  0.768 & 2.60$\times10^{-3}$   &  0.998 & 0.968 \\
sparse $\gamma$-linear reg($\gamma=0.1$)  &  0.603 & 5.71$\times10^{-4}$ & 1.0 & 0.924&  0.563 & 3.78$\times10^{-3}$   & 1.0 & 0.979 \\
sparse $\gamma$-linear reg($\gamma=0.5$) &  0.592 & 5.04$\times10^{-4}$  & 1.0 & 0.982   &   0.566& 4.05$\times10^{-3}$    & 1.0 & 0.981  \\
\end{tabular}
}

\end{table}

\begin{table}[H]
\caption{The outlier pattern (b) with $p=100, \ 200$, $\epsilon=0.3$ and $\rho=0.2, \ 0.5$  }
 \scalebox{0.8}{
\hspace{-1.2cm} 
 \begin{tabular}{c | c c c c | c c c c }    
 \multicolumn{1}{c}{} & \multicolumn{4}{c}{$p=100$, $\epsilon=0.3$, $\rho=0.2$}& \multicolumn{4}{c}{$p=100$, $\epsilon=0.3$, $\rho=0.5$} \\
Methods  & RMSPE & MSE & TPR & TNR  & RMSPE & MSE & TPR & TNR \\ \hline
Lasso  & 2.81& 6.88$\times10^{-2}$   & 0.956 & 0.567   &  3.13 & 7.11$\times10^{-2}$   & 0.97 & 0.584 \\
RLARS & 2.70 & 7.69$\times10^{-2}$    & 0.872 & 0.789 &  2.22 & 6.1$\times10^{-2}$   & 0.852 & 0.855 \\
sLTS($\alpha=0.75$,80grids)  & 3.99   & 1.57$\times10^{-1}$   & 0.856 & 0.757 &   4.18& 1.54$\times10^{-1}$     & 0.878 & 0.771  \\
sLTS($\alpha=0.65$,80grids)  & 3.2 & 1.46$\times10^{-1}$  & 0.888 & 0.854 &   2.69 & 1.08$\times10^{-1}$   & 0.922& 0.867  \\
sLTS($\alpha=0.5$,80grids) &6.51 & 4.62$\times10^{-1}$  & 0.77 & 0.772 & 7.14& 5.11$\times10^{-1}$  & 0.844 & 0.778  \\
sparse $\gamma$-linear reg($\gamma$=0.1)& 1.75 & 3.89$\times10^{-2}$  & 0.974 & 0.725 &  1.47 & 2.66$\times10^{-2}$   & 0.976 & 0.865 \\
sparse $\gamma$-linear reg($\gamma$=0.5) &   1.68 & 3.44$\times10^{-2}$   & 0.98 & 0.782  & 1.65& 3.58$\times10^{-2}$  & 0.974 & 0.863 \\

  \multicolumn{1}{c}{} & \multicolumn{4}{c}{$p=200$, $\epsilon=0.3$, $\rho=0.2$}& \multicolumn{4}{c}{$p=200$, $\epsilon=0.3$, $\rho=0.5$} \\
Methods &   RMSPE & MSE & TPR & TNR &   RMSPE & MSE & TPR & TNR \\ \hline
Lasso   & 2.71 & 3.32$\times10^{-2}$   &  0.964 & 0.734 & 2.86 & 3.05$\times10^{-2}$ &  0.974 & 0.728 \\  
RLARS  & 3.03& 4.59$\times10^{-2}$  & 0.844 & 0.876 & 2.85 & 4.33$\times10^{-2}$   & 0.862 & 0.896\\ 
sLTS($\alpha=0.75$,80grids) &3.73  & 7.95$\times10^{-2}$    & 0.864 & 0.872&4.20 & 8.17$\times10^{-2}$     & 0.878 &0.87  \\  
sLTS($\alpha=0.65$,80grids)  & 4.45 & 1.23$\times10^{-1}$     & 0.85 & 0.886 & 3.61& 8.95$\times10^{-2}$  & 0.904 & 0.908 \\
sLTS($\alpha=0.5$,80grids) &  9.05 & 4.24$\times10^{-1}$   & 0.66 & 0.853 &     8.63& 3.73$\times10^{-1}$   & 0.748 & 0.864 \\
sparse $\gamma$-linear reg($\gamma=0.1$)  &   1.78& 1.62$\times10^{-2}$    & 0.994 & 0.731  &  1.82  & 1.62$\times10^{-2}$   & 0.988 & 0.844 \\
sparse $\gamma$-linear reg($\gamma=0.5$) & 1.79 & 1.69$\times10^{-2}$    & 0.988 & 0.79 &    1.77  &1.51$\times10^{-2}$   & 0.996 & 0.77  \\
\end{tabular}
}

\end{table}

%
%
%
%
%
%
%
%
%
%
%
\section{Real data analyses}
In this section, we use two real datasets to compare our method with comparative methods in real data analysis. 
We show the best result of comparative methods among some parameter situations (e.g., Sec \ref{init}), unless otherwise noted. 
\subsection{NCI-60 cancer cell panel}
We applied our method and comparative methods to regress protein expression on gene expression data at the cancer cell panel of the National Cancer Institute. 
Experimental conditions were set in the same way as in \citet{alfons2013} as follows. 
The gene expression data were obtained with an Affymetrix HG-U133A chip and normalized GCRMA method, resulting in a set of $p=22,283$ explanatory variables. 
The protein expressions based on 162 antibodies were acquired via reverse-phase protein lysate arrays and $\log_2$ transformed. 
One observation had to be removed since all values were missing in the gene expression data, reducing the number of observations to $n=59$. 
Then, the KRT18 antibody was selected as the response variable because it  had the largest MAD among 162 antibodies, i.e., KRT18 may include a large number of outliers. 
Both the protein expressions and the gene expression data can be downloaded via the web application CellMiner (\url{http://discover.nci.nih.gov/cellminer/}). 
As a measure of prediction performance, the root trimmed mean squared prediction error (RTMSPE) was computed via leave-one-out cross-validation given by 
\begin{align*}
\textrm{RTMSPE}=\sqrt{ \frac{1}{h} \sum_{i=1}^h (e)^2_{[i:n]} }, \ \ e^2=\left(\left(y_1 - x_1^{T} \hat{\beta}^{[-1]}  \right)^2 , \ldots , \left(y_n - x_n^{T} \hat{\beta}^{[-n]}  \right)^2 \right),
\end{align*}
where $(e)^2_{[1:n]} \leq   \cdots \leq (e)^2_{[n:n]}$ are the order statistics of $e^2$ and $h= \lfloor (n+1)0.75 \rfloor $. 
The choice of $h$ is important because it is preferable for estimating prediction performance that trimmed squares does not include outliers. 
We set $h$ in the same way as in \citet{alfons2013}, because the sLTS  detected 13 outliers in \citet{alfons2013}. 
In this experiment, we used the estimate of the RANSAC algorithm as an initial point instead of the sLTS because the sLTS required high computational cost in such a high dimensional data. 
\begin{table}[H]
\centering
\caption{ Root trimmed mean squared prediction error (RTMSPE) for protein expressions based on the KRT18 antibody (NCI-60 cancer cell panel data), computed from leave-one-out cross validation}
 \begin{tabular}{c c c}    
Methods &     RTMSPE & \# selected variables\\ \hline
Lasso   & 1.058 & 52 \\  
RLARS  & 0.936  & 18\\ 
sLTS & 0.721 & 33 \\  
sparse $\gamma$-linear reg($\gamma=0.1$) & 0.679 & 29\\
sparse $\gamma$-linear reg($\gamma=0.5$) & 0.700 & 30\\
\end{tabular}
\end{table}
%
%
%
Table 5 shows that our method outperformed other comparative methods
for the RTMSPE in high dimensional data. 
Our method presented the smallest RTMSPE with the second smallest number of explanatory variables. 
The RLARS presented the smallest number of explanatory
variables, but a much larger RTMSPE than our method. 
%
%

%

\subsection{Protein homology dataset}\label{protein_data}
We applied our method and comparative methods to protein sequences dataset used for KDD-Cup 2004. 
Experimental conditions were set in the same way as in  \citet{RePEc:bes:jnlasa:v:102:y:2007:m:december:p:1289-1299} as follows.
The whole dataset consists of $n=145,751$ protein sequences which has $153$ blocks corresponding to native protein. 
Each data point in particular block is a candidate homologous protein. 
There were 75 variables in the data set: the block number (categorical) and 74 measurements of protein features. 
The first protein feature was used as the response variable. 
Then, 5 blocks with a total of $n=4,141$ protein sequences were selected because they contained the highest proportions of homologous proteins (and hence the highest proportions of potential outliers). 
The data of each block was split into two almost equal parts to get a training sample of size $n_{tra} =2,072$ and a test sample of size $n_{test} = 2,069$. 
The number of explanatory variables was $p = 77$, consisting of 4 block indicators (variables 1-4) and 73 features. 
The whole protein, train and test dataset can be downloaded from  \url{http://users.ugent.be/~svaelst/software/RLARS.html}.
As a measure of prediction performance, the root trimmed mean squared prediction error (RTMSPE) was computed for the test sample given by 
\begin{align*}
\textrm{RTMSPE}= \sqrt{ \frac{1}{h} \sum_{i=1}^h (e)^2_{[i:n_{test}]} } , \ \ e^2=\left(\left(y_1 - {x_1}^{T} \hat{\beta}  \right)^2  , \ldots , \left(y_{n_{test}} - x_{n_{test}}^{T} \hat{\beta}  \right)^2 \right),
\end{align*}
where $(e)^2_{[1:n_{test}]} \leq   \cdots \leq (e)^2_{[n_{test}:n_{test}]}$ are the order statistics of $e^2$ and $h= \lfloor (n_{test}+1)0.99 \rfloor \mbox{ , } \lfloor (n_{test}+1)0.95 \rfloor \mbox{ or } \lfloor (n_{test}+1)0.9 \rfloor $.
In this experiment, we used the estimate of the sLTS as an initial point.
\begin{table}[H]
\centering
\caption{Root trimmed mean squared prediction error in the protein test set}
 \begin{tabular}{c c c c c }
\multicolumn{1}{c}{} & \multicolumn{3}{c}{trimming fraction } & \multicolumn{1}{c}{ } \\ \hline    
Methods & 1$\%$  & 5$\%$ & 10$\%$  & \# selected variables   \\ \hline
Lasso   & 10.697 & 9.66 & 8.729 & 22 \\  
RLARS  & 10.473 & 9.435 & 8.527 & 27\\ 
%
%
sLTS & 10.614 & 9.52 & 8.575  & 21\\  
%
%
%
sparse $\gamma$-linear reg($\gamma=0.1$) & 10.461 &  9.403 & 8.481 & 44\\
sparse $\gamma$-linear reg($\gamma=0.5$) & 10.463 &  9.369&  8.419 & 42\\
\end{tabular}
\end{table}
%
%
%
%
Table 6 shows that our method outperformed other comparative methods
for the RTMSPE. 
Our method presented the smallest RTMSPE with the largest number of explanatory variables. 
It might seem that other methods gave the smaller number of explanatory variables than necessary. 
%
%

%
 

\if0
\section{Concluding remarks}

We proposed the robust and sparse regression based on $\gamma$-divergence. 
The efficient iterative algorithm was constructed by the MM algorithm. 
In particular, for sparse $\gamma$-linear regression, the simple iterative algorithm based on coordinate descent algorithm was obtained. 
We also proposed the RoCV for choosing an appropriate tuning parameter.

In this paper, we assume that the true model $f^*(y|x)$ is $f(y|x;\theta^*)$, i.e., $f^*(y|x)$ is included in the parametric model $f(y|x;\theta)$. 
When the true model $f^*(y|x)$ is not included in the parametric model $f(y|x;\theta)$, the robust properties are also hold because it follows from the assumption $\nu_{f^*,\gamma} \approx 0$ that we also expect $\nu_{f_{\theta},\gamma} \approx 0$ if the tail behavior of $f(y|x;\theta)$ is close to that of $f^*(y|x)$. 
This does not imply that the whole behavior of $f(y|x;\theta)$ is necessary to be close to that of $f^*(y|x)$. 

\fi

\section*{Appendix}
\begin{proof}[Proof of Theorem \ref{def-gamma}]
For two non-negative functions $r(x,y)$ and $u(x,y)$ and probability density function $g(x)$, it follows from H\"{o}lder's inequality that
\begin{align*}
\int r(x,y)u(x,y)g(x)dxdy \leq \left( \int r(x,y)^{\alpha} g(x)dxdy \right)^{\frac{1}{\alpha}} \left( \int u(x,y)^{\beta} g(x)dxdy \right)^{\frac{1}{\beta}},
\end{align*}
where $\alpha$ and $\beta$ are positive constants and $\frac{1}{\alpha}+\frac{1}{\beta}=1$. 
The equality holds if and only if $r(x,y)^{\alpha}=\tau u(x,y)^{\beta}$ for a positive constant $\tau$. 
Let $r(x,y)= g(y|x)$, $u(x,y)=f(y|x)^{\gamma}$, $\alpha=1+\gamma$ and $\beta=\frac{1+\gamma}{\gamma}$. 
Then it holds that 
\begin{align*}
&\int \left( \int g(y|x)f(y|x)^{\gamma} dy \right) dg(x) \\
& \leq \left\{ \int \left( \int g(y|x)^{1+\gamma} dy \right) dg(x) \right\}^{\frac{1}{1+\gamma}} \left\{ \int \left( \int f(y|x)^{1+\gamma} dy \right) dg(x) \right\}^{\frac{\gamma}{1+\gamma}}. 
\end{align*} 
The equality holds if and only if $g(y|x)^{1+\gamma}=\tau (f(y|x)^{\gamma})^{\frac{1+\gamma}{\gamma}}$, i.e.  $g(y|x)= f(y|x)$ because $g(y|x)$ and $f(y|x)$ are conditional probability density functions.   
The properties (i), (ii) follow from this inequality, the equality condition and the definition of $D_{\gamma}(g(y|x),f(y|x);g(x))$. 

Let us prove the property (iii). 
Suppose that $\gamma$ is sufficiently small. 
Then, it holds that $f^{\gamma} = 1+ \gamma \log f + O(\gamma^2)$. 
The $\gamma$-divergence for regression is expressed by
\begin{align*}
&D_{\gamma}(g(y|x),f(y|x);g(x)) \\
& = \frac{1}{\gamma(1+\gamma)} \log \int \left\{ \int g(y|x)(1+\gamma\log g(y|x)+ O(\gamma^2)) dy \right\} g(x) dx \\
& \quad - \frac{1}{\gamma} \log \int \left\{ \int g(y|x) (1+\gamma \log f(y|x) + O(\gamma^2) ) dy \right\} g(x) dx \\
& \qquad \frac{1}{1+\gamma} \log \int \left\{ \int f(y|x)(1+\gamma \log f(y|x) +O(\gamma^2)) dy \right\} g(x)dx \\ 
& = \frac{1}{\gamma(1+\gamma)} \log \ \left\{ 1+ \gamma \int \left( \int g(y|x)\log g(y|x) dy \right) g(x) dx+ O(\gamma^2) \right\}  \\
& \quad - \frac{1}{\gamma} \log  \left\{ 1+\gamma \int \left( \int g(y|x) \log f(y|x)dy \right) g(x) dx+ O(\gamma^2)  \right\}  \\
& \qquad \frac{1}{1+\gamma} \log  \left\{  1+\gamma \int \left( \int f(y|x) \log f(y|x) dy \right) g(x) dx +O(\gamma^2) \right\}  \\ 
& = \frac{1}{(1+\gamma)} \int \left( \int g(y|x)\log g(y|x) dy \right) g(x) dx   \\
& \quad -  \int \left( \int g(y|x) \log f(y|x)dy \right) g(x) dx + O(\gamma)  \\
&= \int D_{KL}(g(y|x),f(y|x) ) g(x) dx +O(\gamma).  
\end{align*}
\end{proof}

\begin{proof}[Proof of Theorem \ref{pytha_homogene}]
We see that
\begin{align*}
& \int \left( \int g(y|x)f(y|x;\theta)^{\gamma} dy \right) g(x) dx \\
&= \int \left( \int \left\{  (1-\epsilon)f(y|x;\theta^*) + \epsilon \delta(y|x) \right\} f(y|x;\theta)^{\gamma} dy \right) g(x) dx \\
&=(1-\epsilon) \left\{ \int \left( \int  f(y|x;\theta^*)   f(y|x;\theta)^{\gamma} dy \right) g(x) dx \right\}  + \epsilon \left\{  \int \left( \int    \delta(y|x)  f(y|x;\theta)^{\gamma} dy \right) g(x) dx \right\}.
\end{align*}
It follows from the assumption $\epsilon < \frac{1}{2}$ that
\begin{align*}
\left\{ \epsilon \int \left( \int   \delta(y|x)  f(y|x;\theta)^{\gamma} dy \right)  g(x) dx \right\}^{\frac{1}{\gamma}} & < \left\{ \frac{1}{2} \int \left( \int    \delta(y|x) f(y|x;\theta)^{\gamma} dy \right) g(x) dx \right\}^{\frac{1}{\gamma }} \\
&< \left\{ \int \left( \int    \delta(y|x) f(y|x;\theta)^{\gamma} dy \right) g(x) dx \right\}^{\frac{1}{\gamma} }  =\nu_{f_{\theta},\gamma}   . 
\end{align*}
Hence,
\begin{align*}
\int & \left( \int g(y|x) f(y|x;\theta)^\gamma dy \right) g(x)dx \\
& =(1-\epsilon)\left\{ \int \left( \int f (y|x;\theta^*) f(y|x;\theta)^{\gamma} dy \right) g(x) dx \right\}  +O \left( \nu_{f_{\theta},\gamma}^\gamma \right) .
\end{align*}
Therefore, it holds that
\begin{align*}
& d_{\gamma}(g(y|x),f(y|x;\theta);g(x)) \\
&= -\frac{1}{\gamma} \log \int \left( \int g(y|x) f(y|x;\theta)^{\gamma} dy  \right) g(x) dx  + \frac{1}{1+\gamma} \log \int \left( \int f(y|x;\theta)^{1+\gamma} dy  \right) g(x) dx \\
&= -\frac{1}{\gamma} \log (1-\epsilon) -\frac{1}{\gamma} \log \int \left( \int f(y|x;\theta^*) f(y|x;\theta)^{\gamma} dy  \right) g(x) dx +O \left(  \nu_{f_{\theta},\gamma}^{\gamma}  \right) \\
 & \qquad \qquad \qquad \qquad + \frac{1}{1+\gamma} \log \int \left( \int f(y|x;\theta)^{1+\gamma} dy  \right) g(x) dx \\
&=  -\frac{1}{\gamma} \log (1-\epsilon) + d_\gamma (f(y|x;\theta^*),f(y|x;\theta);g(x))+O\left( \nu_{f_{\theta},\gamma}^{\gamma} \right) .
\end{align*}
Then, it follows that 
\begin{align*}
&D_{\gamma}(g(y|x),f(y|x;\theta);g(x)) - D_{\gamma}(g(y|x),f(y|x;\theta^*);g(x)) -D_{\gamma}(f(y|x;\theta^*),f(y|x;\theta);g(x)) \\[5pt]
&= \left\{ -d_{\gamma}(g(y|x),g(y|x);g(x)) +d_{\gamma}(g(y|x),f(y|x;\theta);g(x))   \right\} \\
& \quad - \left\{ -d_{\gamma}(g(y|x),g(y|x);g(x)) +d_{\gamma}(g(y|x),f(y|x;\theta^*);g(x))   \right\} \\
& \qquad -\left\{ -d_{\gamma}(f(y|x;\theta^*),f(y|x;\theta^*);g(x)) +d_{\gamma}(f(y|x;\theta^*),f(y|x;\theta);g(x))   \right\}   \\[5pt] 
&= d_{\gamma}(g(y|x),f(y|x;\theta);g(x)) - d_{\gamma}(f(y|x;\theta^*),f(y|x;\theta);g(x)) \\
& \qquad - d_{\gamma}(g(y|x),f(y|x;\theta^*);g(x)) +d_{\gamma}(f(y|x;\theta^*),f(y|x;\theta^*);g(x))   \\[5pt]
&= O\left( \nu^{\gamma} \right) .
\end{align*}

\end{proof}

\begin{proof}[Proof of Theorem \ref{hetero_pytha}]
We see that
\begin{align*}
& \int \left( \int g(y|x)f(y|x;\theta)^{\gamma} dy \right) g(x) dx \\
&= \int \left( \int \left\{  (1-\epsilon(x))f(y|x;\theta^*) + \epsilon(x) \delta(y|x) \right\} f(y|x;\theta)^{\gamma} dy \right) g(x) dx \\
&= \left\{ \int \left( \int  f(y|x;\theta^*)   f(y|x;\theta)^{\gamma} dy \right)(1-\epsilon(x)) g(x) dx  + \int \left( \int    \delta(y|x)  f(y|x;\theta)^{\gamma} dy \right) \epsilon(x) g(x) dx \right\}.
\end{align*}
It follows from the assumption $\epsilon(x) < \frac{1}{2} $ that
\begin{align*}
\left\{ \int \left( \int   \delta(y|x)  f(y|x;\theta)^{\gamma} dy \right) \epsilon(x) g(x) dx \right\}^{\frac{1}{\gamma}} & <  \left\{ \int \left( \int    \delta(y|x) f(y|x;\theta)^{\gamma} dy \right) \frac{ g(x)}{2} dx \right\}^{\frac{1}{\gamma}} \\
& <  \left\{ \int \left( \int    \delta(y|x) f(y|x;\theta)^{\gamma} dy \right) g(x) dx \right\}^{\frac{1}{\gamma}}  \\
& = \nu_{f_{\theta,\gamma}}.
\end{align*}
Hence,
\begin{align*}
\int & \left( \int g(y|x) f(y|x;\theta)^\gamma dy \right) g(x)dx \\
& =\left\{ \int \left( \int f (y|x;\theta^*) f(y|x;\theta)^{\gamma} dy \right) (1-\epsilon(x) ) g(x) dx \right\} +O( \nu_{f_\theta,\gamma}^\gamma) .
\end{align*}
Therefore, it holds that
\begin{align*}
& d_{\gamma}(g(y|x),f(y|x;\theta);g(x)) \\
&= -\frac{1}{\gamma} \log \int \left( \int g(y|x) f(y|x;\theta)^{\gamma} dy  \right) g(x) dx  + \frac{1}{1+\gamma} \log \int \left( \int f(y|x;\theta)^{1+\gamma} dy  \right) g(x) dx \\
&=  -\frac{1}{\gamma} \log \left\{ \int \left( \int f (y|x;\theta^*) f(y|x;\theta)^{\gamma} dy \right) (1-\epsilon(x) ) g(x) dx \right\} +O( \nu_{f_{\theta},\gamma}^\gamma) \\
 &   \qquad  + \frac{1}{1+\gamma} \log \int \left( \int f(y|x;\theta)^{1+\gamma} dy  \right) g(x) dx \\
&=  d_\gamma (f(y|x;\theta^*),f(y|x;\theta); (1-\epsilon(x)) g(x))  +O( \nu_{f_\theta,\gamma}^{\gamma}) \\
& \quad - \frac{1}{1+\gamma} \log \int \left( \int f(y|x;\theta)^{1+\gamma} dy  \right) (1-\epsilon(x))g(x) dx \\
& \qquad \quad  + \frac{1}{1+\gamma} \log \int \left( \int f(y|x;\theta)^{1+\gamma} dy  \right) g(x) dx \\
&=  d_\gamma (f(y|x;\theta^*),f(y|x;\theta); (1-\epsilon(x)) g(x))  +O( \nu_{f_\theta,\gamma}^{\gamma}) - \frac{1}{1+\gamma} \log \left\{1-\int \epsilon(x)g(x) dx \right\} .
\end{align*}
Then, it follows that 
\begin{align*}
&D_{\gamma}(g(y|x),f(y|x;\theta);g(x)) \\
&  - D_{\gamma}(g(y|x),f(y|x;\theta^*);g(x)) -D_{\gamma}(f(y|x;\theta^*),f(y|x;\theta);(1-\epsilon(x))g(x)) \\[5pt]
&= \left\{ -d_{\gamma}(g(y|x),g(y|x);g(x)) +d_{\gamma}(g(y|x),f(y|x;\theta);g(x))   \right\} \\
& \quad - \left\{ -d_{\gamma}(g(y|x),g(y|x);g(x)) +d_{\gamma}(g(y|x),f(y|x;\theta^*);g(x))   \right\} \\
& \quad -\left\{ -d_{\gamma}(f(y|x;\theta^*),f(y|x;\theta^*);(1-\epsilon(x))g(x)) +d_{\gamma}(f(y|x;\theta^*),f(y|x;\theta);(1-\epsilon(x))g(x))   \right\}   \\[5pt] 
&= d_{\gamma}(g(y|x),f(y|x;\theta);g(x)) - d_{\gamma}(f(y|x;\theta^*),f(y|x;\theta);(1-\epsilon(x))g(x)) \\
& \quad - d_{\gamma}(g(y|x),f(y|x;\theta^*);g(x)) +d_{\gamma}(f(y|x;\theta^*),f(y|x;\theta^*);(1-\epsilon(x))g(x))   \\[5pt]
%
%
&=  O\left( \nu^{\gamma} \right).
\end{align*}
\end{proof}

\bibliography{ref}

\if0

\newpage
\begin{table}[!h]
\centering
\caption{The outlier pattern (a) with 10\% outlier ratio. The number of the explanatory variable was set to 100 (upper table) and 200 (lower table). The covariance matrix of explanatory variable was $\Sigma=(\sigma_{ij})_{1\leq i,j \leq p}$ is $\sigma_{ij}=0.2^{|i-j|}$.}
\scalebox{0.8}{
 \begin{tabular}{c | c c c c }    
Methods  & RMSPE & MSE & TPR & TNR \\ \hline
Lasso  & 3.04 & 9.72$\times10^{-2}$ & 0.936 & 0.909 \\
RLARS & 0.806 & 6.46$\times10^{-3}$ &  0.936 & 0.949 \\
sLTS($\alpha=0.85$,80grids)  & 0.626 & 1.34$\times10^{-3}$ & 1.0  & 0.964 \\
sLTS($\alpha=0.75$,80grids) & 0.651 & 1.71$\times10^{-3}$ &  1.0 & 0.961 \\
sLTS($\alpha=0.65$,80grids) & 0.685 & 2.31$\times10^{-3}$ & 1.0 & 0.957  \\
sparse $\gamma$-linear reg($\gamma$=0.1)&  0.557 & 6.71$\times10^{-4}$ &  1.0 & 0.966\\
sparse $\gamma$-linear reg($\gamma$=0.5) & 0.575 & 8.25$\times10^{-4}$ & 1.0 & 0.961 \\
%

 \multicolumn{5}{c}{} \\
Methods &   RMSPE & MSE & TPR & TNR  \\ \hline
Lasso   &  3.55 & 6.28$\times10^{-2}$ & 0.904 & 0.956 \\
RLARS  &  0.88 & 3.8$\times10^{-3}$ & 0.904 & 0.977 \\
sLTS($\alpha=0.85$,80grids) & 0.631  & 7.48$\times10^{-4}$ & 1.0 & 0.972  \\
sLTS($\alpha=0.75$,80grids)  &  0.677 & 1.03$\times10^{-3}$ & 1.0 & 0.966 \\
sLTS($\alpha=0.65$,80grids) &   0.823 & 2.34$\times10^{-3}$ & 0.998 & 0.96\\
sparse $\gamma$-linear reg($\gamma=0.1$)  &  0.58 & 4.19$\times10^{-4}$ & 1.0 & 0.981 \\
sparse $\gamma$-linear reg($\gamma=0.5$) &  0.589 & 5.15$\times10^{-4}$  & 1.0 & 0.979  \\
\end{tabular}
}
\end{table}

\begin{table}[!h]
\centering
\caption{The outlier pattern (a) with 10\% outlier ratio.  The number of the explanatory variable was set to 100 (upper table) and 200 (lower table). The covariance matrix of explanatory variable was $\Sigma=(\sigma_{ij})_{1\leq i,j \leq p}$ is $\sigma_{ij}=0.5^{|i-j|}$.}
\scalebox{0.8}{ 
 \begin{tabular}{c | c c c c }    
Methods  & RMSPE & MSE & TPR & TNR \\ \hline
Lasso  & 3.1 & 1.05$\times10^{-1}$ & 0.952 & 0.918 \\
RLARS & 0.718  & 6.7$\times10^{-3}$ &  0.944 & 0.962 \\
sLTS($\alpha=0.85$,80grids)  & 0.599 & 1.05$\times10^{-3}$ &  1.0 & 0.966 \\
sLTS($\alpha=0.75$,80grids) & 0.623 & 1.33$\times10^{-3}$ &  1.0 & 0.961 \\
sLTS($\alpha=0.65$,80grids) & 0.668 & 1.76$\times10^{-3}$ & 1.0 & 0.961  \\
sparse $\gamma$-linear reg($\gamma$=0.1)&  0.561 & 6.99$\times10^{-4}$ &  1.0 & 0.965\\
sparse $\gamma$-linear reg($\gamma$=0.5) & 0.573 & 9.05$\times10^{-4}$ & 1.0 & 0.959 \\
%

 \multicolumn{5}{c}{} \\
Methods &   RMSPE & MSE & TPR & TNR  \\ \hline
Lasso   &  3.37 & 6.08$\times10^{-2}$ & 0.928 & 0.961 \\
RLARS  &  0.843 & 4.46$\times10^{-3}$ & 0.9 & 0.986\\
sLTS($\alpha=0.85$,80grids) &0.614  & 5.77$\times10^{-4}$ & 1.0 & 0.976  \\
sLTS($\alpha=0.75$,80grids)  &  0.632 & 7.08$\times10^{-4}$ & 1.0 & 0.973 \\
sLTS($\alpha=0.65$,80grids) &   0.7 & 1.25$\times10^{-3}$ & 1.0 & 0.967\\
sparse $\gamma$-linear reg($\gamma=0.1$)  &  0.557 & 3.71$\times10^{-4}$ & 1.0 & 0.977 \\
sparse $\gamma$-linear reg($\gamma=0.5$) &  0.586 & 5.13$\times10^{-4}$  & 1.0 & 0.977  \\
\end{tabular}
}
\end{table}

\fi

\if0

\newpage
\begin{table}[!h]
\centering
\caption{The outlier pattern (a) with 30\% outlier ratio. The number of the explanatory variable was set to 100 (upper table) and 200 (lower table). The covariance matrix of explanatory variable was $\Sigma=(\sigma_{ij})_{1\leq i,j \leq p}$ is $\sigma_{ij}=0.2^{|i-j|}$.}
\scalebox{0.8}{ 
 \begin{tabular}{c | c c c c }    
Methods  & RMSPE & MSE & TPR & TNR \\ \hline
Lasso  & 8.07 & 6.72$\times10^{-1}$ & 0.806 & 0.903 \\
RLARS & 2.65  & 1.54$\times10^{-1}$ &  0.75 & 0.963 \\
sLTS($\alpha=0.75$,80grids)  & 10.4 & 2.08 &  0.886 & 0.709 \\
sLTS($\alpha=0.65$,80grids) & 2.12 & 3.66$\times10^{-1}$ &  0.972 & 0.899 \\
sLTS($\alpha=0.5$,80grids) & 1.37 & 1.46$\times10^{-1}$ & 0.984 & 0.896  \\
sparse $\gamma$-linear reg($\gamma$=0.1)&  1.13 & 9.16$\times10^{-2}$ &  0.964 & 0.97\\
sparse $\gamma$-linear reg($\gamma$=0.5) & 1.28 & 1.5$\times10^{-1}$ & 0.986 & 0.952 \\
%

 \multicolumn{5}{c}{} \\
Methods &   RMSPE & MSE & TPR & TNR  \\ \hline
Lasso   &  8.11 & 3.4$\times10^{-1}$ & 0.77 & 0.951 \\
RLARS  &  3.6 & 1.7$\times10^{-1}$ & 0.71 & 0.978\\
sLTS($\alpha=0.75$,80grids) & 11.5  & 1.16 & 0.738 & 0.809  \\
sLTS($\alpha=0.65$,80grids)  &  3.34 & 3.01$\times10^{-1}$ & 0.94 & 0.929 \\
sLTS($\alpha=0.5$,80grids) &   4.02 & 3.33$\times10^{-1}$ & 0.892 & 0.903\\
sparse $\gamma$-linear reg($\gamma=0.1$)  &  2.03 & 1.45$\times10^{-1}$ & 0.964 & 0.924 \\
sparse $\gamma$-linear reg($\gamma=0.5$) &  1.23 & 7.69$\times10^{-2}$  & 0.988 & 0.942  \\
\end{tabular}
}
\end{table}

\begin{table}[!h]
\centering
\caption{The outlier pattern (a) with 30\% outlier ratio. The number of the explanatory variable was set to 100 (upper table) and 200 (lower table). The covariance matrix of explanatory variable was $\Sigma=(\sigma_{ij})_{1\leq i,j \leq p}$ is $\sigma_{ij}=0.5^{|i-j|}$.}
\scalebox{0.8}{ 
 \begin{tabular}{c | c c c c }    
Methods  & RMSPE & MSE & TPR & TNR \\ \hline
Lasso  & 8.02 & 6.51$\times10^{-1}$ & 0.81 & 0.91 \\
RLARS & 2.09  & 1.17$\times10^{-1}$ &  0.812 & 0.966 \\
sLTS($\alpha=0.75$,80grids)  & 11.7 & 2.36 &  0.854 & 0.67 \\
sLTS($\alpha=0.65$,80grids) & 2.89 & 5.13$\times10^{-1}$ &  0.966 & 0.887 \\
sLTS($\alpha=0.5$,80grids) & 1.53 & 1.97$\times10^{-1}$ & 0.976 & 0.909  \\
sparse $\gamma$-linear reg($\gamma$=0.1)&  0.961 & 5.38$\times10^{-2}$ &  0.982 & 0.977\\
sparse $\gamma$-linear reg($\gamma$=0.5) & 1.00 & 8.48$\times10^{-2}$ & 0.988 & 0.958\\
%

 \multicolumn{5}{c}{} \\
Methods &   RMSPE & MSE & TPR & TNR  \\ \hline
Lasso   &  8.1 & 3.32$\times10^{-1}$ & 0.8 & 0.952 \\
RLARS  &  2.67 & 1.02$\times10^{-1}$ & 0.76 & 0.984 \\
sLTS($\alpha=0.75$,80grids) & 11.9  & 1.17& 0.78 & 0.811  \\
sLTS($\alpha=0.65$,80grids)  &  4.22 & 4.08$\times10^{-1}$ & 0.928 & 0.924 \\
sLTS($\alpha=0.5$,80grids) &   4.94 & 4.44$\times10^{-1}$ & 0.842 & 0.909\\
sparse $\gamma$-linear reg($\gamma=0.1$) &  3.2 & 2.86$\times10^{-1}$  & 0.94 & 0.936  \\
sparse $\gamma$-linear reg($\gamma=0.5$)  &  3.13 & 2.98$\times10^{-1}$ & 0.944 & 0.94 \\
\end{tabular}
}
\end{table}
\fi

\if0
\newpage
\begin{table}[!h]
\centering
\caption{The outlier pattern (b) with 10\% outlier ratio. The number of the explanatory variable was set to 100 (upper table) and 200 (lower table). The covariance matrix of explanatory variable was $\Sigma=(\sigma_{ij})_{1\leq i,j \leq p}$ is $\sigma_{ij}=0.5^{|i-j|}$.}
\scalebox{0.8}{ 
 \begin{tabular}{c | c c c c }    

Methods  & RMSPE & MSE & TPR & TNR \\ \hline
Lasso  & 2.84 & 5.91$\times10^{-2}$  & 0.98 & 0.565 \\
RLARS & 0.829 & 7.97$\times10^{-3}$  & 0.91 & 0.885 \\
sLTS($\alpha=0.85$,80grids) &  0.684 & 3.76$\times10^{-3}$ & 1.0 & 0.961 \\
sLTS($\alpha=0.75$,80grids) &  0.648 & 1.59$\times10^{-3}$  & 1.0 & 0.961  \\
sLTS($\alpha=0.65$,80grids) & 0.66 & 1.74$\times10^{-3}$    &1.0  & 0.962 \\
sparse $\gamma$-linear reg($\gamma$=0.1) &  0.545  & 5.44$\times10^{-4}$  & 1.0 & 0.975\\
sparse $\gamma$-linear reg($\gamma$=0.5)&  0.546 & 5.95$\times10^{-4}$  & 1.0 & 0.977 \\
%
 \multicolumn{5}{c}{} \\

Methods  & RMSPE   & MSE &  TPR & TNR \\ \hline
Lasso   & 2.57 & 2.54$\times10^{-2}$  & 0.944 & 0.706 \\
RLARS  & 0.877 & 4.82$\times10^{-3}$  & 0.898 & 0.94  \\
sLTS($\alpha=0.85$,80grids)&0.790  & 3.40$\times10^{-3}$   &0.994  &0.97  \\
sLTS($\alpha=0.75$,80grids)  & 0.635& 7.40$\times10^{-4}$   & 1.0 & 0.977  \\
sLTS($\alpha=0.65$,80grids) & 0.768 & 2.60$\times10^{-3}$   &  0.998 & 0.968 \\
sparse $\gamma$-linear reg($\gamma=0.1$) & 0.563 & 3.78$\times10^{-3}$   & 1.0 & 0.979 \\
sparse $\gamma$-linear reg($\gamma=0.5$)& 0.566& 4.05$\times10^{-3}$    & 1.0 & 0.981 \\
\end{tabular}
}
\end{table}

\fi

\end{document}